\documentclass{eptcs}

\usepackage[dvipsnames,usenames]{xcolor}
\usepackage{amsmath,amssymb,dsfont,wasysym,stmaryrd,mathrsfs,turnstile,cancel,graphicx,caption,subcaption,wrapfig,hyperref}
\message{<Paul Taylor's Proof Trees, 2 August 1996>}

\newdimen\proofrulebreadth \proofrulebreadth=.05em
\newdimen\proofdotseparation \proofdotseparation=1.25ex
\newdimen\proofrulebaseline \proofrulebaseline=2ex
\newcount\proofdotnumber \proofdotnumber=3
\let\then\relax
\def\hfi{\hskip0pt plus.0001fil} 
\mathchardef\squigto="3A3B
%
\newif\ifinsideprooftree\insideprooftreefalse
\newif\ifonleftofproofrule\onleftofproofrulefalse
\newif\ifproofdots\proofdotsfalse
\newif\ifdoubleproof\doubleprooffalse
\let\wereinproofbit\relax
%
\newdimen\shortenproofleft
\newdimen\shortenproofright
\newdimen\proofbelowshift
\newbox\proofabove
\newbox\proofbelow
\newbox\proofrulename
%
\def\shiftproofbelow{\let\next\relax\afterassignment\setshiftproofbelow\dimen0 }
\def\shiftproofbelowneg{\def\next{\multiply\dimen0 by-1 }%
\afterassignment\setshiftproofbelow\dimen0 }
\def\setshiftproofbelow{\next\proofbelowshift=\dimen0 }
\def\setproofrulebreadth{\proofrulebreadth}

\def\prooftree{
%
\ifnum  \lastpenalty=1
\then   \unpenalty
\else   \onleftofproofrulefalse
\fi
%
\ifonleftofproofrule
\else   \ifinsideprooftree
        \then   \hskip.5em plus1fil
        \fi
\fi
%
\bgroup
\setbox\proofbelow=\hbox{}\setbox\proofrulename=\hbox{}%
\let\justifies\proofover\let\leadsto\proofoverdots\let\Justifies\proofoverdbl
\let\using\proofusing\let\[\prooftree
\ifinsideprooftree\let\]\endprooftree\fi
\proofdotsfalse\doubleprooffalse
\let\thickness\setproofrulebreadth
\let\shiftright\shiftproofbelow \let\shift\shiftproofbelow
\let\shiftleft\shiftproofbelowneg
\let\ifwasinsideprooftree\ifinsideprooftree
\insideprooftreetrue
%
\setbox\proofabove=\hbox\bgroup$\displaystyle 
\let\wereinproofbit\prooftree
%
\shortenproofleft=0pt \shortenproofright=0pt \proofbelowshift=0pt
%
\onleftofproofruletrue\penalty1
}

\def\eproofbit{
%
\ifx    \wereinproofbit\prooftree
\then   \ifcase \lastpenalty
        \then   \shortenproofright=0pt  
        \or     \unpenalty\hfil         
        \or     \unpenalty\unskip       
        \else   \shortenproofright=0pt  
        \fi
\fi
%
\global\dimen0=\shortenproofleft
\global\dimen1=\shortenproofright
\global\dimen2=\proofrulebreadth
\global\dimen3=\proofbelowshift
\global\dimen4=\proofdotseparation
\global\count255=\proofdotnumber
%
$\egroup  
%
\shortenproofleft=\dimen0
\shortenproofright=\dimen1
\proofrulebreadth=\dimen2
\proofbelowshift=\dimen3
\proofdotseparation=\dimen4
\proofdotnumber=\count255
}

\def\proofover{
\eproofbit 
\setbox\proofbelow=\hbox\bgroup 
\let\wereinproofbit\proofover
$\displaystyle
}%
%
\def\proofoverdbl{
\eproofbit 
\doubleprooftrue
\setbox\proofbelow=\hbox\bgroup 
\let\wereinproofbit\proofoverdbl
$\displaystyle
}%
%
\def\proofoverdots{
\eproofbit 
\proofdotstrue
\setbox\proofbelow=\hbox\bgroup 
\let\wereinproofbit\proofoverdots
$\displaystyle
}%
%
\def\proofusing{
\eproofbit 
\setbox\proofrulename=\hbox\bgroup 
\let\wereinproofbit\proofusing
\kern0.3em$
}

\def\endprooftree{
\eproofbit 
  \dimen5 =0pt
%
\dimen0=\wd\proofabove \advance\dimen0-\shortenproofleft
\advance\dimen0-\shortenproofright
%
\dimen1=.5\dimen0 \advance\dimen1-.5\wd\proofbelow
\dimen4=\dimen1
\advance\dimen1\proofbelowshift \advance\dimen4-\proofbelowshift
%
\ifdim  \dimen1<0pt
\then   \advance\shortenproofleft\dimen1
        \advance\dimen0-\dimen1
        \dimen1=0pt
        \ifdim  \shortenproofleft<0pt
        \then   \setbox\proofabove=\hbox{%
                        \kern-\shortenproofleft\unhbox\proofabove}%
                \shortenproofleft=0pt
        \fi
\fi
%
\ifdim  \dimen4<0pt
\then   \advance\shortenproofright\dimen4
        \advance\dimen0-\dimen4
        \dimen4=0pt
\fi
%
\ifdim  \shortenproofright<\wd\proofrulename
\then   \shortenproofright=\wd\proofrulename
\fi
%
\dimen2=\shortenproofleft \advance\dimen2 by\dimen1
\dimen3=\shortenproofright\advance\dimen3 by\dimen4
%
\ifproofdots
\then
        \dimen6=\shortenproofleft \advance\dimen6 .5\dimen0
        \setbox1=\vbox to\proofdotseparation{\vss\hbox{$\cdot$}\vss}%
        \setbox0=\hbox{%
                \advance\dimen6-.5\wd1
                \kern\dimen6
                $\vcenter to\proofdotnumber\proofdotseparation
                        {\leaders\box1\vfill}$%
                \unhbox\proofrulename}%
\else   \dimen6=\fontdimen22\the\textfont2 
        \dimen7=\dimen6
        \advance\dimen6by.5\proofrulebreadth
        \advance\dimen7by-.5\proofrulebreadth
        \setbox0=\hbox{%
                \kern\shortenproofleft
                \ifdoubleproof
                \then   \hbox to\dimen0{%
                        $\mathsurround0pt\mathord=\mkern-6mu%
                        \cleaders\hbox{$\mkern-2mu=\mkern-2mu$}\hfill
                        \mkern-6mu\mathord=$}%
                \else   \vrule height\dimen6 depth-\dimen7 width\dimen0
                \fi
                \unhbox\proofrulename}%
        \ht0=\dimen6 \dp0=-\dimen7
\fi
%
\let\doll\relax
\ifwasinsideprooftree
\then   \let\VBOX\vbox
\else   \ifmmode\else$\let\doll=$\fi
        \let\VBOX\vcenter
\fi
\VBOX   {\baselineskip\proofrulebaseline \lineskip.2ex
        \expandafter\lineskiplimit\ifproofdots0ex\else-0.6ex\fi
        \hbox   spread\dimen5   {\hfi\unhbox\proofabove\hfi}%
        \hbox{\box0}
        \hbox   {\kern\dimen2 \box\proofbelow}}\doll%
%
\global\dimen2=\dimen2
\global\dimen3=\dimen3
\egroup 
\ifonleftofproofrule
\then   \shortenproofleft=\dimen2
\fi
\shortenproofright=\dimen3
%
\onleftofproofrulefalse
\ifinsideprooftree
\then   \hskip.5em plus 1fil \penalty2
\fi
}


\newtheorem{theorem}{Theorem}[section]
\newtheorem{corollary}[theorem]{Corollary}
\newtheorem{lemma}[theorem]{Lemma}
\newtheorem{proposition}[theorem]{Proposition}
\newtheorem{definition}[theorem]{Definition}

\newtheorem{exmp}[theorem]{Example}

\newenvironment{proof}{\vspace{-0.1em}\noindent\textit{Proof}.\ }{\hfill$\blacksquare$}



\definecolor{red}{rgb}{1,0,0}

\definecolor{purple}{rgb}{.5,.0,.5}



\DeclareMathAlphabet{\mathpzc}{OT1}{pzc}{m}{it}







\usepackage{listings}

\newcommand{\IsDefAs}{\ensuremath{\stackrel{\textnormal{\tiny{def}}}{=}}}
\newcommand{\InitSays}{\sststile{}{\circ}}
\newcommand{\RuleFont}[1]{(\textsc{#1})}
\newcommand{\NotUnknownTo}{\ensuremath{\not\hspace{0.2em}?}}
\newcommand{\Wildcard}{\ensuremath{{}_{\textnormal{\textemdash}}}}
\newcommand{\Supposing}{\ensuremath{\mathbin{+_?}}}
\newcommand{\SuppUp}[1]{\ensuremath{[#1]_{_?}}}
\newcommand{\NotRel}{\ensuremath{\not\hspace{-0.1em}r}}
\newcommand{\NotConc}{\ensuremath{\not\hspace{0.1em}\parallel}}
\newcommand{\SingleC}{\ensuremath{\stackrel{c}{\rightarrow}}}
\newcommand{\SingleCNamed}[1]{\ensuremath{\stackrel{#1}{\rightarrow}}}
\newcommand{\MultiCs}{\ensuremath{\stackrel{\overline{c}}{\twoheadrightarrow}}}
\newcommand{\MultiCsPed}{\ensuremath{\stackrel{\overline{c}'}{\twoheadrightarrow}}}
\newcommand{\MultiCsNamed}[1]{\ensuremath{\stackrel{#1}{\twoheadrightarrow}}}
\newcommand{\ForwardTS}{\ensuremath{\mathscr{T}_F(W)}}
\newcommand{\ForBiSim}{\ensuremath{\sim_F}}
\newcommand{\SingleBC}{\ensuremath{\stackrel{c}{\leftarrow}}}
\newcommand{\BackwardTS}{\ensuremath{\mathscr{T}_B(W)}}
\newcommand{\BackBiSim}{\ensuremath{\sim_B}}
\newcommand{\LastRuleOf}[1]{\ensuremath{\mathsf{lr}(#1)}}
\newcommand{\MultiBCs}{\ensuremath{\stackrel{\overline{c}}{\twoheadleftarrow}}}
\newcommand{\SingleBCNamed}[1]{\ensuremath{\stackrel{#1}{\leftarrow}}}
\newcommand{\MultiBCsPed}{\ensuremath{\stackrel{\overline{c}'}{\twoheadleftarrow}}}

\makeatletter
\newcommand{\labitem}[2]{%
\def\@itemlabel{(\textbf{#1})}
\item
\def\@currentlabel{#1}\label{#2}}
\makeatother

\begin{document}

\title{Worlds of Events\\{\Large Deduction with Partial Knowledge about Causality}}

\author{Seyed Hossein Haeri \qquad Peter Van Roy
\institute{Universit\'e catholique de Louvain, Belgium}
\and
Carlos Baquero
\institute{Universidade do Minho, Portugal}
\and
Christopher Meiklejohn
\institute{Universit\'e catholique de Louvain, Belgium}
}

\def\titlerunning{Worlds of Events}
\def\authorrunning{S. H. Haeri, P. Van Roy, C. Baquero, \& C. Meiklejohn}

\maketitle

\begin{abstract}
Interactions between internet users are mediated by their devices and the common support infrastructure in data centres.
Keeping track of causality amongst actions that take place in this distributed system is key to provide a seamless interaction where effects follow causes.
Tracking causality in large scale interactions is difficult due to the cost of keeping large quantities of metadata;
even more challenging when dealing with resource-limited devices.
In this paper, we focus on keeping partial knowledge on causality and address deduction from that knowledge. 

We provide the first proof-theoretic causality modelling for distributed partial knowledge.
We prove computability and consistency results.
We also prove that the partial knowledge gives rise to a weaker model than classical causality.
We provide rules for offline deduction about causality and refute some related folklore.
We define two notions of forward and backward bisimilarity between devices, using which we prove two important results.
Namely, no matter the order of addition/removal, two devices deduce similarly about causality so long as: (1) the same causal information is fed to both. (2) they start bisimilar and erase the same causal information.
Thanks to our establishment of forward and backward bisimilarity, respectively, proofs of the latter two results work by simple induction on length.
\end{abstract}

\section{Introduction}

Causality \cite{Lamp:1978,Schw+Matt:1994} is an essential for our perception of the physical world, and of our relations to other entities.
If one puts a cup on a table, and looks back at it, one expects it to be there.
One also expects to get a reply to one's postcards, \textbf{after} they were sent, and not before.  

Given the fault-tolerance and high availability expected of internet-based services today, distributed algorithms have become ubiquitous.
One duty of these algorithms is to order events totally across multiple replicas of a service.
This total order is required to ensure computation determinism; given the requirement of having multiple replicas appear as a single system, each replica must implement a state machine which observes the same events in the same order~\cite{Schn:1990}.
However, because of the amount of coordination required, a total order in the entire distributed system is not always feasible while maintaining availability~\cite{Gilb+Lync:2002}.

Given the intractability of a total order, techniques that favour a partial order based on causality are explored for they express user's \textbf{intent}.
For a key-value store, one's writes may, e.g., be directed to one replica, and, subsequent reads served from a replica which has not yet received those writes.
If we consider the canonical example of an access control list for viewing photos~\cite{Lloy+Free+Kami+Ande:2011}, one would expect that a write operation removing Eve from having access to Alice's photos prior to Alice uploading a photo she did not want Eve to see, would be observed in this order by Eve when performing read operations.
 
However, tracking causality can be very expensive in terms of metadata size; more so when interactions amongst many distinct entities are targeted \cite{Char:1991}.
Devising scalable solutions to causality tracking is a demanding problem \cite{Lloy+Free+Kami+Ande:2011,Serg+Baqu+Gonc+Preg+Font:2014} to the extent that some solutions even accept to lose causal information by pruning metadata \cite{Deca+Hast+Jamp+Kaku+Laks+Pilc+Siva+Voss+Voge:2007}.
Be it because available resources are limited (say to an edge device) or because a replica (say in a data centre) is temporarily out-of-sync, only a partial view of the system causality might be available.
There is not much study on dealing with that partiality of knowledge, however.
In this paper, we address that problem via a proof-theoretic modelling for partial causality knowledge of distributed systems.
Partiality is not a loss in our model:
What is not stored might be deducible -- acting in favour of metadata size reduction.

Contributions of this paper are as follows:
\begin{enumerate}
  \item We model distributed causality such that the holistic system and the partial causal knowledge of a device are categorically distinct (Definitions~\ref{Defn:WoE} and \ref{Defn:MCosm}).
  \item We offer rules for deducing causality when a device is online (Definition~\ref{Defn:Micro.Decision}) and prove its computability (Theorem~\ref{Theo:Rules.Comp}) and consistency (Theorem~\ref{Theo:Micro.Rules.Const}).
  \item We show that deduction of causality with partial knowledge is strictly less accurate than the holistic causal knowledge (Lemma~\ref{Lemm:WoE.More.Accurate.Micro}) and that the deductions of different devices do not conflict (Corollary~\ref{Corr:Mu.Hats}).
  \item We offer rules for a device to deduce causal information independent of new causal data from outside, e.g., when offline (Definition~\ref{Defn:Offline.Rules}) and prove its consistency with the online rules (Lemmata~\ref{Lemm:Offline.Not.Refute.Online} and \ref{Lemm:Offline.Not.Disagree.Online}).
    We also prove a related folklore wrong (Lemma~\ref{Lemm:Sample.Refutation.1}) using the latter machinery.
  \item We craft a notion of bisimilarity (Definition~\ref{Defn:Micro.Analog}) and prove that the order of arrival of new causal data is insignificant for bisimilar devices (Theorem~\ref{Theo:For.Order.Irrelevant}).
  \item We craft another notion of bisimilarity (Definition~\ref{Defn:Back.Bisimilarity}) to prove that the order of removal of causal data is also insignificant for bisimilar devices (Theorem~\ref{Theo:Back.Order.Irrelevant}).
\end{enumerate}

Unlike traditional approaches to causality modelling that store a partial order of known causally related events and consider non related events as concurrent events, we explicitly model concurrency information and provide a broader spectrum of relations amongst events.  

\paragraph*{Real-World Benefits} \label{Sect:Benefits}

The technical developments of this paper are beneficial in the following ways:

Firstly, whilst being more general, our forward bisimilarity (Definition~\ref{Defn:For.Bisimilarity}) captures replication: like-stated replicas are bisimilar.
Replication in distributed systems serves fault-tolerance in that, for example, a like-stated replica will cover for a crashed replica.
The idea is that, because the replica providing the cover was like-stated, the crash will go \textbf{unobserved}.
Our forward bisimilarity serves that by its formalism for observational equivalence.
Secondly, offline decision making (Definition~\ref{Defn:Offline.Rules}) entails that, in presence of network partitions, a device gone offline will still be able to make (useful) new deductions (e.g., Lemma~\ref{Lemm:Sample.Refutation.1}).
It only is that the new deductions may not be at the same level of accuracy as those of its bisimilar devices that are still connected or when the device itself retrieves connection (Lemma~\ref{Lemm:WoE.More.Accurate.Micro}).
That is the service offline decision making provides to fault-tolerance.
Thirdly, Theorems~\ref{Theo:For.Order.Irrelevant} and \ref{Theo:Back.Order.Irrelevant} are formal characterisations for strong eventual consistency---so long as all the correspondences arrive/leave, the replicas are causally consistent, i.e., forward/backward bisimilar---serving key-value stores.

\section{Worlds of Events and Microcosms}

Call a binary relation $R$ a strict partial order on a set $P$ when $R$ is irreflexive, asymmetric, and transitive.
Then, we say that $(P, R)$ is a strict poset.
For a strict poset $(T, R)$, when $R$ is also total, call $(T, R)$ a strict chain.
Let $R$ be a relation on a set $S$.
For a subset $U$ of $S$, the symbol $R_{|U}$ denotes $R$ restricted to $U$.
We use ``$\veebar$" for the exclusive or of mathematical logic.
For a set $S$, write $|S|$ for the cardinality of $S$.
As is common in Set Theory, $\aleph_0$ denotes the cardinality of Natural Numbers ($\mathbb{N}$).
Throughout this paper, ``\Wildcard" is our wild card; its usage expresses our lack of interest in the exact details of what ``\Wildcard" has replaced.

\begin{definition} \label{Defn:WoE}
  Call $W(<, \parallel)$ a world of events when:
  \begin{enumerate}
    \labitem{W1}{Misc:W1} $W$ is an infinitely countable set (i.e., $|W| = \aleph_0$) of events that are ranged over by $e_1, e_2, \dots, e, e', \dots$,
    \labitem{W2}{Misc:W2} $<$ and $\parallel$ are binary relations defined on $W$ that are ranged over by $r_1, r_2, \dots,$ $r, r', \dots$,
    \labitem{W3}{Misc:W3} $(W, <)$ is a strict poset,
    \labitem{W4}{Misc:W4} $\parallel$ is irreflexive and symmetric, and
    \labitem{W5}{Misc:W5} $e_1 \neq e_2$ iff $e_1 \parallel e_2 \veebar e_1 < e_2 \veebar e_2 < e_1$.
  \end{enumerate}
  For $e_1, e_2 \in W$, when $(e_1, e_2) \in r$ for $r \in \{<, \parallel\}$, we write $W \vDash e_1\ r\ e_2$ and say $e_1\ r\ e_2$ holds for $W$.$\Square$
\end{definition}

The relations $<$ and $\parallel$ denote the familiar \textsf{happens-before} and \textsf{is-concurrent-with}, respectively \cite{Lamp:1978}.
Notice that here we define $\parallel$ explicitly \--- whilst the usual derived definition for non strict posets covers the elements that are not related in the order by stating $e_1 \parallel e_2$ iff $e_1 \not\leq e_2 \wedge e_2 \not\leq e_1$. 
Notice also that, in line with the traditional understanding about it \cite[Observation 1.3]{Schw+Matt:1994}, (\ref{Misc:W4}) does not define $\parallel$ transitive.

For a world of events, we take the relation $<>$ (read \textsf{is-causally-related-to})%
\footnote{For a use of $<>$ in reality, see the CISE proof system \cite{Gots+Yang+Ferr+Naja+Shap:2016}.
In a valid CISE execution, when a pair of events $e$ and $e'$ possess conflicting tokens, it is required that $e <> e'$.}
as a syntactic sugar for $< \cup <^{-1}$, namely, $e_1 <> e_2 \IsDefAs e_1 < e_2 \vee e_2 < e_1$.
Hence, $<>$ is symmetric. We can also observe that every distinct pair of events are attributed to exactly one of the basic relations, and that ``$<$'' $\cap$ ``$<^{-1}$'' $\cap$ ``$=$'' $\cap$ ``$\parallel$'' is always $\emptyset$. 

Fix the set of \textbf{accurate} relations $R = \{<, \parallel\}$.
The relation $<>$ is an \textbf{inaccurate} relation in that it does not expose the exact known direction of $<$.
We now extend $\vDash$ to $\vDash^*$ for when the inaccurate relation $<>$ is also needed to be taken into consideration.
Write $W \vDash^* e_1\ r\ e_2$ iff: $W \vDash e_1\ r\ e_2$; or, $r =\ <>$ and either $W \vDash e_1 < e_2$ or $W \vDash e_2 < e_1$.
Note that, unlike $\vDash$, not every distinct pair of events are attributed to a unique relation by $\vDash^*$.
In particular, for every $e_1$ and $e_2$ such that $W \vDash e_1 < e_2$, by definition, it is both the case that $W \vDash^* e_1 < e_2$ and $W \vDash^* e_1 <> e_2$.
We call $e_1\ r\ e_2$ a correspondence, ranged over by $c_1, c_2, \dots, c, c', \dots$
For a world of events $W$, we also fix $\mathcal{C}^*_W = \{c \mid W \vDash^* c\}$.
For $c = e_1\ r\ e_2$, we say $c$ is an accurate correspondence when $r$ is accurate.
We call $c$ inaccurate otherwise.

\begin{proposition} \label{Prop:WoE.Const}
  Every world of events $W$ is consistent: $W \vDash e_1\ r\ e_2$ and $W \vDash e_1\ r'\ e_2$ imply $r = r'$.
\end{proposition}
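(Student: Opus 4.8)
The plan is to reduce the statement to the single interesting case and then exploit the exclusive-or clause~(\ref{Misc:W5}). Since the relations recognised by $\vDash$ range only over $R = \{<, \parallel\}$, the hypothesis $r \neq r'$ can, up to swapping the two witnesses, only mean that $r$ is $<$ and $r'$ is $\parallel$. So it suffices to show that no world of events $W$ satisfies both $W \vDash e_1 < e_2$ and $W \vDash e_1 \parallel e_2$; I would establish this by contradiction, assuming both hold.

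First I would record two elementary facts about the three statements $e_1 \parallel e_2$, $e_1 < e_2$, and $e_2 < e_1$. From $W \vDash e_1 < e_2$ together with irreflexivity of $<$---part of ``$(W,<)$ is a strict poset'' in~(\ref{Misc:W3})---we get $e_1 \neq e_2$. Moreover $e_2 < e_1$ cannot also hold: combined with $e_1 < e_2$, transitivity (again~(\ref{Misc:W3})) would give $e_1 < e_1$, contradicting irreflexivity. Hence, under the contradiction hypothesis, of those three statements exactly the first two are true and the third is false.

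Now~(\ref{Misc:W5}) is applicable precisely because $e_1 \neq e_2$: it asserts that the exclusive disjunction $e_1 \parallel e_2 \veebar e_1 < e_2 \veebar e_2 < e_1$ holds. But a truth assignment making exactly two of the three disjuncts true falsifies this exclusive disjunction---and this is robust to how one parenthesises the iterated $\veebar$, since a two-out-of-three pattern already fails it under either reading. This contradiction forces $r = r'$. (The degenerate possibility $e_1 = e_2$ never arises in this argument, and would in any case make both $W \vDash e_1 < e_2$ and $W \vDash e_1 \parallel e_2$ impossible, by the irreflexivity clauses of~(\ref{Misc:W3}) and~(\ref{Misc:W4}) respectively.)

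I do not expect any genuine obstacle here: the content is entirely contained in the axioms. The one point worth making explicit is why the all-three-true case is excluded---this is exactly where transitivity together with irreflexivity of $<$ enters---since without ruling it out an odd-parity reading of $\veebar$ could a priori still be satisfied. Everything else is immediate from Definition~\ref{Defn:WoE}.
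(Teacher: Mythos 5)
Your argument is correct, and it is essentially the argument the paper intends: the paper states this proposition without proof, treating it as immediate from the observation following Definition~\ref{Defn:WoE} that every distinct pair of events is attributed to exactly one of the basic relations, which is precisely what you verify from~(\ref{Misc:W3})--(\ref{Misc:W5}). Your care about the parity versus exactly-one reading of the iterated $\veebar$, and the use of asymmetry (or irreflexivity plus transitivity) to exclude the all-three-true case, makes the omitted step explicit without changing the route.
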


\begin{definition} \label{Defn:MCosm}
  Let $W(<, \parallel)$ be a world of events.
  Call $M(I, E)$ a microcosm of $W$ (write $M \vartriangleleft W$) when:
  \begin{enumerate}
    \labitem{M1}{Misc:M1} $I \subset W$ and $|I| < \aleph_0$,
    \labitem{M2}{Misc:M2} $(I, <_{|I})$ is a strict chain, 
    \labitem{M3}{Misc:M3} $E \subset \mathcal{C}^*_W$ and $|E| < \aleph_0$,
    \labitem{M4}{Misc:M4} $e_1\ r\ e_2 \in E$ implies that there is no chain of events $e'_1, \dots, e'_n$ in $M$ such that $e_1 = e'_1 <_{|M} \dots <_{|M} e'_n = e_2$ or $e_2 = e'_1 <_{|M} \dots <_{|M} e'_n = e_1$.%
      \footnote{More on the motivation behind (\ref{Misc:M4}) to follow.}
  \end{enumerate}
  Accordingly, call $W$ the enclosing world of $M$ and let $\mathcal{M}_W = \{M \mid M \vartriangleleft W \}$.$\Square$
\end{definition}

The difference between the notation we use for worlds of events and the one we use for microcosms might cause confusion at the first glance.
In addition to being the world of events, the $W$ in $W(<, \parallel)$ is a set, $<$ and $\parallel$ are relations on which.
To the contrary, the $M$ in $M(I, E)$ is only a name for the pair $(I, E)$.
Furthermore, $I$ is a set of events, whilst $E$ is a set of correspondences; they are not of the same sort.

For an $M(I, E)$, we refer to $I$ as the \textbf{internal} events of $M$, and, to $E$ as the set of \textbf{external} correspondences known to it.
When appropriate, we use the alternative notions $I(M)$ and $E(M)$, respectively.
Write $e_1 < e_2 \in I$ when $e_1, e_2 \in I$ and $W \vDash e_1 < e_2$.
Besides, write $e_1\ r\ e_2 \in M$, when $e_1\ r\ e_2 \in I$ or $e_1\ r\ e_2 \in E$.
Write $e \in E$ when $\exists e' \in W.\ e\ \Wildcard\ e' \in E \vee e'\ \Wildcard\ e \in E$.
Finally, write $e \in M$ when $e \in I$ or $e \in E$.
That is how we formalise the notion of microcosm membership informally used in (\ref{Misc:M4}).

A microcosm is our abstraction for a single state -- out of the possibly many states -- of a generic device.
The enclosing world of events of a microcosm is the abstraction we use for the ecosystem in which a device lives.
Certain events can take place locally for a device; in which case, they are stored in the internal events of the respective microcosm.
The correspondence between certain events can also be disclosed to a device by the ecosystem; in which case, they are stored in the external correspondences of the respective microcosm.
In our model, devices do not get to communicate directly with one another.
The ecosystem sits between devices in that news from other devices in the same ecosystem arrives via the ecosystem (as opposed to the other devices themselves).

Note that, unlike a world of events, for a microcosm, the relation $<>$ is not a syntactic sugar.
To the latter, an $<>$ instance is all the information that is given for the respective pair of events.
In that case, whilst no stronger information about the given pair is provided to the microcosm, the enclosing world of events is aware of the exact $<$ direction between the pair.
It, nevertheless, follows from irreflexiveness of $<$ that $<>$ is irreflexive too -- both for worlds of events and their microcosms.

\begin{exmp}
  For a microcosm $M$ such that $I(M) = e_1 < e_2 < e_3$, no correspondence $e_1\ r\ e_2$ can exist in $E(M)$ or (\ref{Misc:M4}) will be violated.
  There is no need for any ``order" to exist between all the events a microcosm knows of -- be it partial or total.
  $M' = (\varnothing, \{e_1 <> e_2, e_2 <> e_3\})$ is an entirely fine microcosm (perhaps of the same world of events as $M$), in which there is neither a total order nor a partial order between $e_1$, $e_2$, and $e_3$.
  $M'' = (M' + e3 < e4)$\footnote{Notation defined at the end of section.} is another permissible microcosm -- regardless of whether $e_3 < e_4 \in I(M'')$ or $e_3 < e_4 \in E(M'')$.
  Note that $M''$ does contain a partial order but no total one.
  Finally, $M''' = (I(M), \varnothing)$ is yet another valid microcosm, in which there truly is a total order.
\end{exmp}

\begin{wrapfigure}{r}{0.39\textwidth}
  \vspace{-2.5em}
  \begin{center}
  \hrule
    \vspace{-1em}
    $$
      \framebox[1.1\width]{$M \InitSays e_1\ r\ e_2$} \qquad \textnormal{where } r \in R \cup \{<>\}
    $$
    $$
      \begin{prooftree}
          e_1\ r\ e_2 \in M
        \justifies
          M \InitSays e_1\ r\ e_2
        \using
          \RuleFont{Init}
      \end{prooftree}
    $$
    $$
      \begin{prooftree}
          M \InitSays e_1 < e_2 \quad M \InitSays e_2 < e_3
        \justifies
          M \InitSays e_1 < e_3
        \using
          \RuleFont{In-Tr}
      \end{prooftree}
    $$
    $$
      \begin{prooftree}
          M \InitSays e_1 \parallel e_2
        \justifies
          M \InitSays e_2 \parallel e_1
        \using
          \RuleFont{Co-Sym}
      \end{prooftree}
    $$
    $$
      \begin{prooftree}
          M \InitSays e_1 <> e_2
        \justifies
          M \InitSays e_2 <> e_1
        \using
          \RuleFont{CR-Sym}
      \end{prooftree}
    $$
    \vspace{-0.5em}
  \hrule
  \end{center}
  \vspace{-0.5em}
  \caption{Microcosm Initial Judgements}
  \label{Fig:Init.Rules}
  \vspace{-6em}
\end{wrapfigure}

We now introduce the first sort of deduction for microcosms (Definition~\ref{Defn:Init.Judge}).
The idea is that such a deduction is for a microcosm to decree on the correspondences it does know of.
Later in Section~\ref{Sect:Online.Decision}, we will generalise deduction for a microcosm to also conclude that it does not know the correspondence between a given pair of events.

\begin{definition} \label{Defn:Init.Judge}
  Let $M$ be a microcosm.
  Judgements of the form $M \InitSays e_1\ r\ e_2$ are called the initial judgements of $M$ when they are derived using the rules in Fig.~\ref{Fig:Init.Rules}.
  Write $M \not\InitSays e_1\ r\ e_2$ when $M \InitSays e_1\ r\ e_2$ is not true.$\Square$
\end{definition}

Note that with ``\Wildcard'' being existential in nature, the negation acts universally.
In particular, $M \not\InitSays e_1\ \Wildcard\ e_2$ stipulates the lack of \textit{any} initial correspondence between $e_1$ and $e_2$ in $M$.

Here is an informal account of the rules in Fig.~\ref{Fig:Init.Rules}:
\RuleFont{Init} states that every piece of information that is initially provided to a microcosm is reliable in the initial judgements made inside that microcosm.
\RuleFont{In-Tr} legislates transitivity of $<$ for initial judgements (regardless of whether the premises come from internal or external knowledge of a microcosm or a combination of those).
\RuleFont{Co-Sym} and \RuleFont{CR-Sym} are routine and legislate symmetry for $\parallel$ and $<>$.

There are two possible ways a microcosm can evolve upon receipt of new information:
(Section~\ref{Sect:Online.Decision} gives more details about the intuition and the semantics of the two evolution mechanisms.)
For a microcosm $M$, when $M \not\InitSays e_1\ \Wildcard\ e_2$, write $(M + e_1\ r\ e_2)$ for the microcosm $M$ with the additional information $e_1\ r\ e_2$.
We assume that $e_1\ r\ e_2$ is known to be internal or external to the resulting microcosm.
Likewise, when $M \InitSays e_1 <> e_2$ (or $M \InitSays e_2 <> e_1$), define $M[e_1 < e_2]$ for the microcosm $M$ in which $e_1 < e_2$ replaces $e_1 <> e_2$ (or $e_2 <> e_1$).

\begin{wrapfigure}{r}{0.56\textwidth}
  \vspace{-2em}
  \hrule
  \vspace{-0.6em}
  \begin{center}
    \begin{subfigure}[b]{0.25\columnwidth}
        \includegraphics[width=\textwidth]{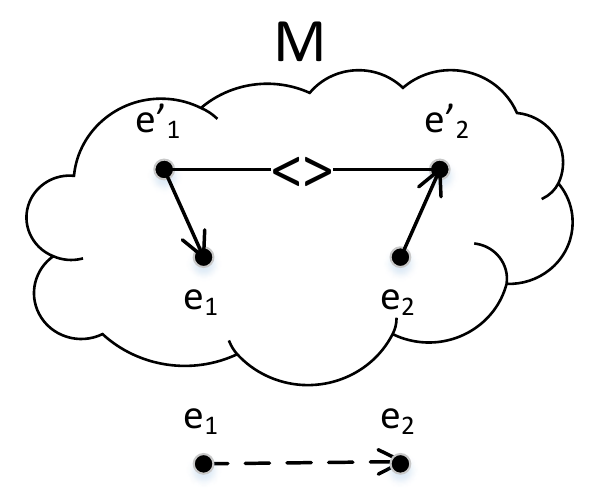}
        \caption{Addition of $e_1 < e_2$ to $M$ violates (\ref{Misc:M4}).}
        \label{Fig:NVC1}
    \end{subfigure}
    \begin{subfigure}[b]{0.25\columnwidth}
        \includegraphics[width=\textwidth]{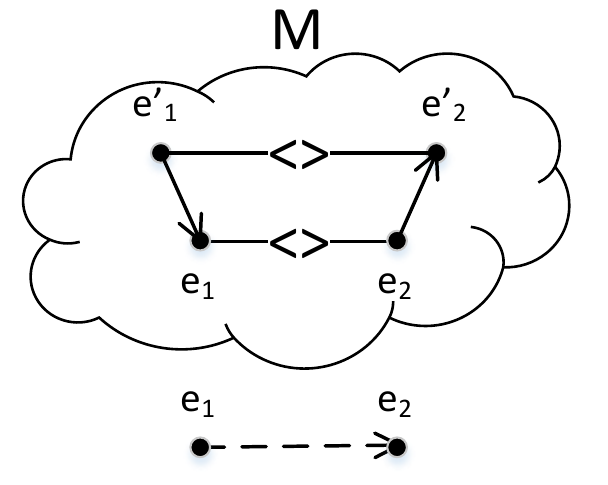}
        \caption{Updating $M$ with $e_1 < e_2$ violates (\ref{Misc:M4}).}
        \label{Fig:NVC2}
    \end{subfigure}
    \linebreak
    \underline{Legend}:
      Arrow from $e_1$ to $e_2$ indicates $e_1 < e_2$.
      Line labelled with ``$<>$'' between $e_1$ and $e_2$ indicates $e_1 <> e_2$.
  \end{center}
  \hrule
  \caption{Illegal for $e'_1 < e_1 < e_2 < e'_2$ whilst $e'_1 <> e'_2 \in M$}
  \label{Fig:NVCs}
  \vspace{-2em}
\end{wrapfigure}

For both addition -- namely, $(M + e_1\ r\ e_2)$ -- and update -- namely, $M[e_1 < e_2]$ -- we assume that the change will not violate (\ref{Misc:M4}).
See Fig.~\ref{Fig:NVC1} and \ref{Fig:NVC2} for when careless addition and update violate (\ref{Misc:M4}).
That can be considered a limitation in our model:
Once a microcosm reaches either state, further evolution via the given correspondence is banned by (\ref{Misc:M4}) forever.
We believe a batch addition (and update) can circumvent that limitation; of course, subject to sanity checks.
Take Fig.~\ref{Fig:NVC2} for example:
A batch update $M[e_1 < e_2, e'_1 < e'_2]$ is the simultaneous evolution of $M$ with both $e_1 < e_2$ and $e'_1 < e'_2$, in which (\ref{Misc:M4}) is also maintained when $e'_1 < e'_2$ is removed afterwards.
In this paper, we disregard batch addition and update and leave them to future.

The above discussion gives us context for explaining our design choice on including (\ref{Misc:M4}) in Definition~\ref{Defn:MCosm}.
Note that, without (\ref{Misc:M4}) outlawing it, after the addition of $e_1 < e_2$ to $M$ in Fig.~\ref{Fig:NVC1}, for instance, the initial judgements of the resulting microcosm would have become inconsistent:
On the one hand, $e'_1 <> e'_2$ would have been given; on the other hand, $e'_1 < e'_2$ would have been deducible by transitivity.

\section{Online Decision Making} \label{Sect:Online.Decision}

This section provides an algorithm for a microcosm to issue its verdict on the relation it can deduce, to the best of its knowledge, to hold between a queried pair of (distinct) events.
This algorithm (manifested in Fig.~\ref{Fig:Rules}) is called the \textbf{online} decision making procedure.
The idea is that the decision accuracy keeps improving using this procedure upon the inflow of the new or updated correspondences.
In crude terms, this is the situation where the device is connected and thus online.
Contrast this with what comes in Section~\ref{Sect:Offline.Decision}.
We prove computability (Theorem~\ref{Theo:Rules.Comp}) and consistency (Theorem~\ref{Theo:Micro.Rules.Const}) of the online decision making.
We show that the causal knowledge of a microcosm is strictly less than its enclosing world of events (Lemma~\ref{Lemm:WoE.More.Accurate.Micro}), there is no conflict between the verdict of two microcosms of the same world of events -- even when they do not issue the exact same correspondence (Corollary~\ref{Corr:Mu.Hats}).

\begin{figure*}
  \hrule
    $$
      \framebox[1.1\width]{$M \vdash e_1\ r\ e_2$} \qquad \textnormal{ where } r \in R \cup \{<>, ?\}
    $$
    $$
      \begin{prooftree}
          M \InitSays e_1\ r\ e_2
        \justifies
          M \vdash e_1\ r\ e_2
        \using
          \RuleFont{In-OK}
      \end{prooftree} \qquad
      \begin{prooftree}
          M \vdash e_1\ ?\ e_2
        \justifies
          M \vdash e_2\ ?\ e_1
        \using
          \RuleFont{Un-Sym}
      \end{prooftree}
    $$
    $$
      \begin{prooftree}
          \begin{array}{c@{\quad}c}
            M \vdash e_1\ r\ e_2       & M \vdash e_2\ r\ e_3\\
            r \in \{\parallel, <>, ?\} & M \not\InitSays e_1\ \Wildcard\ e_3
          \end{array}
        \justifies
          M \vdash e_1\ ?\ e_3
        \using
          \RuleFont{Un-1}
      \end{prooftree} \qquad
      \begin{prooftree}
          \begin{array}{c@{\quad}c}
            M \vdash e_1\ r\ e_2 & M \vdash e_2\ ?\ e_3\\
            r \in R \cup \{<>\} & M \not\InitSays e_1\ \Wildcard\ e_3
          \end{array}
        \justifies
          M \vdash e_1\ ?\ e_3
        \using
          \RuleFont{Un-2}
      \end{prooftree}
    $$
    $$
      \begin{prooftree}
          M \not\InitSays e_1\ \Wildcard\ e_2 \quad
          \nexists e' \in M.\ [(M \vdash e_1\ \Wildcard\ e') \wedge (M \vdash e'\ \Wildcard\ e_2)]
        \justifies
          M \vdash e_1\ ?\ e_2
        \using
          \RuleFont{Un-3} 
      \end{prooftree} \qquad
      \begin{prooftree}
          e \notin M \quad e \neq e'
        \justifies
          M \vdash e\ ?\ e'
        \using
          \RuleFont{Un-4}
      \end{prooftree}
    $$
    $$
      \begin{prooftree}
          M \vdash e_1\ ?\ e_2 \quad M \vartriangleleft W \quad W \vDash^* e_1\ r\ e_2
        \justifies
          (M + e_1\ r\ e_2) \vdash e_1\ r\ e_2
        \using
          \RuleFont{Strng}
      \end{prooftree} \quad
      \begin{prooftree}
          M \vdash e_1\ r\ e_2 \quad r \neq\ ? \quad M \vartriangleleft W \quad W \vDash^* e'_1\ r'\ e'_2 
        \justifies
          (M + e'_1\ r'\ e'_2) \vdash e_1\ r\ e_2
        \using
          \RuleFont{Weak}
      \end{prooftree}
    $$
    $$
      \begin{prooftree}
          M \vartriangleleft W \quad M \vdash e_1 <> e_2 \quad W \vDash e_1 < e_2
        \justifies
          M[e_1 < e_2] \vdash e_1 < e_2
        \using
          \RuleFont{Up-S}
      \end{prooftree}
    $$
    $$
      \begin{prooftree}
          M \vartriangleleft W \quad W \vDash e_1 < e_2 \quad M \vdash e_1 <> e_2 \quad M \vdash e'_1\ r'\ e'_2 \quad r'\ \neq\ ?
        \justifies
          M[e_1 < e_2] \vdash e'_1\ r'\ e'_2
        \using
          \RuleFont{Up-W}
      \end{prooftree}
    $$
  \hrule
  \caption{Online Decision Making}
  \vspace{-3em}
  \label{Fig:Rules}
\end{figure*}

\begin{definition} \label{Defn:Micro.Decision}
  Define the online decision making procedure of a microcosm using the rules in Fig.~\ref{Fig:Rules}.$\Square$
\end{definition}

In Fig.~\ref{Fig:Rules}, a judgement $M \vdash e_1\ ?\ e_2$ stipulates the lack of knowledge ``in $M$'' about the correspondence between the pair of events $e_1$ and $e_2$.
As such, $?$ (read \textsf{is-unknown-to}) is another inaccurate relation.
Note that, unlike $<>$, the relation $?$ is only available for microcosms.
Recall that, as axiomatised by (\ref{Misc:W5}), the correspondence between every two distinct pair of events is known to their enclosing world of events.

Here is an informal account of the rules in Fig.~\ref{Fig:Rules}:

\RuleFont{In-OK} says online decision making approves of initial judgements.
\RuleFont{Un-Sym} legislates symmetry of $?$.
The next two rows concern when a microcosm judges two events as unknown to one another.
\RuleFont{Un-1} decrees so when there is an intermediate event $e_2$ that has the same correspondence $r$ with both $e_1$ and $e_3$ (in different orders albeit).
Of course, given the transitivity of $<$, in the case of \RuleFont{Un-1}, $r$ cannot be $<$.
\RuleFont{Un-2} is similar except that, in the microcosm of discourse, the intermediate event $e_2$ is unknown to $e_3$.
Then, \RuleFont{Un-3} decrees for $e_1$ and $e_2$ to be unknown to one another when there is no intermediate event in the microcosm that is in correspondence with both $e_1$ and $e_2$.
The last rule of the group, i.e., \RuleFont{Un-4} declares the correspondence between an event that is not in a microcosm to be unknown with any other event.
Note that all the \RuleFont{Un-*} rules except \RuleFont{Un-4} assume that the microcosm has no initial judgements between the two events.

Next, the rules in the fourth row concern when a microcosm is supplied with new event information.
With such a supply, the microcosm of discourse evolves into a new one.
To this latter microcosm, one (and only one) more initial correspondence is available than the old microcosm.
Evolution happens either by strengthening or weakening.
\RuleFont{Strng} states that, when two events are judged to be unknown to one another by a microcosm, the judgement will be changed accordingly when the respective information from the enclosing world of event evolves the microcosm. 
\RuleFont{Weak} says the supply of new event information from the enclosing world of events preserves every event correspondence decreed earlier not to be unknown.
Note that the supply of new information is only possible via the enclosing world of event.

Finally, the last two rules are on update of $<>$ instances.
\RuleFont{Up-S} (for strengthening) and \RuleFont{Up-W} (for weakening) are the update counterparts \RuleFont{Strng} and \RuleFont{Weak}.
The difference is that, for the former pair of rules, the total number of correspondences initially known to the old microcosm and the new one are equal.
Yet, in \RuleFont{Up-S} and \RuleFont{Up-W}, one and only one $<>$ in the old microcosm is replaced by exactly one $<$ in the new microcosm.
The respective microcosm judgements are updated consequently.

The following lemma will later be used in Lemma~\ref{Lemm:Sample.Refutation.1}.

\begin{lemma} \label{Lemm:Online.Impl.Init}
  Suppose that $M \vdash e_1\ r\ e_2$, where $r \in R \cup \{<>\}$.
  Then, $M \InitSays e_1\ r\ e_2$.
\end{lemma}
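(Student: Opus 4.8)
The plan is to argue by structural induction on the derivation of $M \vdash e_1\ r\ e_2$ in Fig.~\ref{Fig:Rules}, keeping $r \in R \cup \{<>\}$ (so $r \neq\ ?$) fixed throughout. The first observation does most of the pruning: the rules \RuleFont{Un-Sym}, \RuleFont{Un-1}, \RuleFont{Un-2}, \RuleFont{Un-3} and \RuleFont{Un-4} all have a conclusion whose relation is $?$, so none of them can be the last rule of a derivation of $M \vdash e_1\ r\ e_2$; only \RuleFont{In-OK}, \RuleFont{Strng}, \RuleFont{Weak}, \RuleFont{Up-S} and \RuleFont{Up-W} survive.

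For \RuleFont{In-OK} there is nothing to prove: its premise is literally $M \InitSays e_1\ r\ e_2$. For \RuleFont{Strng} and \RuleFont{Up-S} the conclusion concerns an enlarged microcosm — $(M + e_1\ r\ e_2)$ in the first case, $M[e_1 < e_2]$ in the second — and the correspondence being concluded is exactly the one just inserted: by the definition of $M + e_1\ r\ e_2$ the datum $e_1\ r\ e_2$ belongs to the internal or external knowledge of the new microcosm, and by the definition of $M[e_1 < e_2]$ the datum $e_1 < e_2$ replaces the old $<>$ datum and hence belongs to the new microcosm. In both cases \RuleFont{Init} of Fig.~\ref{Fig:Init.Rules} delivers the initial judgement directly, with no recourse to the induction hypothesis. (For \RuleFont{Up-S} one should first note that $M[e_1 < e_2]$ is well-formed: applying the induction hypothesis to the premise $M \vdash e_1 <> e_2$ gives $M \InitSays e_1 <> e_2$, which is precisely what licenses the update.)

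The remaining cases, \RuleFont{Weak} and \RuleFont{Up-W}, are where the argument has content, because the premise there talks about $M$ while the conclusion talks about a different microcosm $M'$ ($M + e'_1\ r'\ e'_2$, respectively $M[e_1 < e_2]$). I would apply the induction hypothesis to the premise $M \vdash e_1\ r\ e_2$ (respectively $M \vdash e'_1\ r'\ e'_2$ with $r' \neq\ ?$) to obtain the corresponding initial judgement for $M$, and then transport it to $M'$ via a small monotonicity lemma about Fig.~\ref{Fig:Init.Rules}: the operation $M \mapsto M + c$ only adds a datum, and the operation $M \mapsto M[e_1 < e_2]$ only replaces the single datum $e_1 <> e_2$ by the strictly stronger $e_1 < e_2$; since in Fig.~\ref{Fig:Init.Rules} a $<>$ datum is consumed only by \RuleFont{CR-Sym} (which merely re-derives $e_2 <> e_1$ from $e_1 <> e_2$), discarding $e_1 <> e_2$ can cost nothing beyond the derivability of $e_1 <> e_2$ and $e_2 <> e_1$ themselves. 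Hence any Fig.~\ref{Fig:Init.Rules} derivation of an accurate or $<>$ correspondence other than the promoted pair replays verbatim in $M'$, yielding $M' \InitSays$ the target. (In \RuleFont{Up-W} the concluded correspondence is taken to be other than the pair being promoted — otherwise \RuleFont{Up-S} applies and gives the stronger $<$ — so the replay is unobstructed.)

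So the only genuine obstacle is isolating and proving this preservation-under-enlargement property for the initial calculus, which in turn rests on the structural fact that $<>$ occurs only as a "leaf" in Fig.~\ref{Fig:Init.Rules}; once that is in hand, the theorem is a routine case analysis over the finitely many rules of Fig.~\ref{Fig:Rules}.
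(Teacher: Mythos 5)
The paper states this lemma without proof, so there is nothing to match your argument against; judged on its own, your rule induction is correct and is the natural way to discharge it. Your case split is right (only \RuleFont{In-OK}, \RuleFont{Strng}, \RuleFont{Weak}, \RuleFont{Up-S}, \RuleFont{Up-W} can end a derivation with $r \neq\ ?$), and the two auxiliary facts you isolate are exactly what the nontrivial cases need: monotonicity of $\InitSays$ under $M \mapsto (M+c)$ (immediate, since the rules of Fig.~\ref{Fig:Init.Rules} consult the data of $M$ only at \RuleFont{Init} leaves, and addition — internal or external — only enlarges the set of available data), and the observation that a $<>$ datum can only feed \RuleFont{Init} and \RuleFont{CR-Sym}, hence only supports $<>$ judgements about that same pair, so removing it in $M[e_1 < e_2]$ cannot break the derivation of any other correspondence. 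The one place where you had to make an interpretive choice is \RuleFont{Up-W} instantiated with $e'_1\ r'\ e'_2 = e_1 <> e_2$: as literally written the rule admits this, and it would then contradict both the lemma and \RuleFont{Up-S} via Theorem~\ref{Theo:Micro.Rules.Const}; your reading that the weakened correspondence is distinct from the promoted pair is the only one consistent with the paper's description of \RuleFont{Up-W} as the counterpart of \RuleFont{Weak}, and you were right to flag it explicitly rather than gloss over it.
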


Here are some notational conventions to be used shortly and thereafter:
We let $\Pi, \Pi', \dots, \Pi_1, \Pi_2, \dots$ range over derivation trees.
For a derivation tree $\Pi$, we write $\LastRuleOf{\Pi}$ for the last rule used in $\Pi$.
Additionally, for a derivation $\Pi$ of the form
$$
  \begin{prooftree}
      \Pi_1 \quad \Pi_2 \quad \dots \quad \Pi_n
    \justifies
      M\ \Wildcard\ c'
  \end{prooftree}
$$
\noindent we write $c \notin \Pi$ when $c \neq c'$ and $c \notin \Pi_1, c \notin \Pi_2, \ldots, c \notin \Pi_n$.
(The ``\Wildcard'' in ``$M\ \Wildcard\ c'$'' above can be ``$\InitSays$'', ``$\vdash$'', and ``$\vdash^*$.''
See Definition~\ref{Defn:Offline.Rules} for the latter.)
Intuitively, $c \not\in \Pi$ means that `$c$ does not appear in $\Pi$.'

\begin{lemma} \label{Lemm:No.Use.Shrink}
  Let $M \not\InitSays c$ and $\Pi = (M + c) \vdash e_1\ r\ e_2$ but $c \notin \Pi$.
  Then, $M \vdash e_1\ r\ e_2$.
\end{lemma}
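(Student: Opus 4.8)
The plan is to prove Lemma~\ref{Lemm:No.Use.Shrink} by induction on the structure of the derivation $\Pi$ of $(M+c) \vdash e_1\ r\ e_2$, showing in each case that if $c$ does not appear anywhere in $\Pi$ (neither at the root nor in any sub-derivation), then the same conclusion can be rederived from $M$ alone. The key observation is that $c \notin \Pi$ forces every leaf of $\Pi$ to be justified without using $c$ as an initial correspondence: wherever $\Pi$ invokes \RuleFont{In-OK} (and hence \RuleFont{Init}), the correspondence appealed to already lies in $M$, not merely in $M+c$. So the first thing I would do is unpack the definition of $c \notin \Pi$ from the notational conventions: it propagates to all $\Pi_i$ and rules out $c$ at the root, so by induction every sub-derivation $\Pi_i$ also satisfies $c \notin \Pi_i$, and the induction hypothesis applies to each.

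The main case analysis is on $\LastRuleOf{\Pi}$. For the purely ``logical'' rules---\RuleFont{In-OK}, \RuleFont{Un-Sym}, \RuleFont{Un-1}, \RuleFont{Un-2}, \RuleFont{Co-Sym}, \RuleFont{CR-Sym}, \RuleFont{In-Tr}---the argument is uniform: apply the induction hypothesis to each premise derivation to get $M \vdash$ (or $M \InitSays$) versions of the premises, then reapply the same rule. One must check that the side conditions still hold over $M$ rather than $M+c$; the only delicate ones are the $M \not\InitSays e_1\ \Wildcard\ e_3$ conditions in \RuleFont{Un-1}, \RuleFont{Un-2}, \RuleFont{Un-3}, but since $I(M) \subseteq I(M+c)$ and $E(M) \subseteq E(M+c)$, anything $M+c$ fails to initially judge, $M$ also fails to initially judge---the negative condition is monotone in the right direction, so it transfers for free. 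For \RuleFont{Un-4} the side condition $e \notin M+c$ must be strengthened to $e \notin M$; but $e \notin M+c$ already implies $e \notin M$ since $M \subseteq M+c$, so this is immediate. The existential side condition in \RuleFont{Un-3} ($\nexists e' \in M+c$ with the bridging property) similarly weakens correctly when we restrict from $M+c$ to $M$, because fewer events and fewer correspondences can only kill bridges.

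The genuinely problematic cases are the evolution rules \RuleFont{Strng}, \RuleFont{Weak}, \RuleFont{Up-S}, \RuleFont{Up-W}, whose conclusions are about a microcosm of the shape $(N + c')$ or $N[e_1<e_2]$, i.e. they are \emph{themselves} about an augmented microcosm. If $\LastRuleOf{\Pi}$ is \RuleFont{Strng} or \RuleFont{Weak}, then the conclusion reads $(N + c'') \vdash \dots$ and we need $(N+c'')$ to literally \emph{be} $M+c$. The crucial point is that $c \notin \Pi$ means in particular $c \neq c''$ (the correspondence introduced at the root), hence the correspondence $c$ must already be present in $N$, so that $N = N' + c$ for the appropriate $N'$, and we can push the use of the induction hypothesis inside. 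Concretely: $M + c = N + c''$ with $c \neq c''$ means $c \in N$; writing $M = N'+ c''$ we have $N = N' + c$; the premise derivation $\Pi_1$ of \RuleFont{Strng}/\RuleFont{Weak} is over $N = N'+c$ and satisfies $c \notin \Pi_1$, so the IH gives a derivation over $N'$; then reapply \RuleFont{Strng}/\RuleFont{Weak} to conclude over $N' + c'' = M$. I expect this bookkeeping---showing that stripping $c$ commutes with the last evolution step, and that the ``$M \vartriangleleft W$'' and ``$W \vDash^* \dots$'' premises are untouched---to be the main obstacle, essentially because one has to be careful that $c$ being absent from the derivation genuinely means it was sitting passively inside $M$ all along. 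The \RuleFont{Up-S}/\RuleFont{Up-W} cases are analogous, with the added wrinkle that the update $[e_1<e_2]$ and the addition of $c$ must be shown to commute, which holds because $c \notin \Pi$ rules out $c$ being the very $<>$-instance that got updated.
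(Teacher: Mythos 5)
The paper never actually proves Lemma~\ref{Lemm:No.Use.Shrink} (it is stated, glossed informally, and then used in Lemma~\ref{Lemm:Analog.No.Use}), so there is no official proof to compare yours against. Judged on its own, your strategy---induction on $\Pi$ with a case split on $\LastRuleOf{\Pi}$, transferring the negative side conditions by monotonicity of $\InitSays$, and commuting the stripping of $c$ past the final evolution step---is the natural one and most of it goes through. Two places need real repair. First, your claim that ``$c \notin \Pi$ means in particular $c \neq c''$'' is only automatic for \RuleFont{Strng} and \RuleFont{Up-S}, where the correspondence introduced at the root \emph{is} the correspondence of the concluding judgement. For \RuleFont{Weak} and \RuleFont{Up-W} the introduced correspondence $e'_1\ r'\ e'_2$ occurs only in the name of the resulting microcosm and in the side condition $W \vDash^* e'_1\ r'\ e'_2$, neither of which is a judgement of the form $N\ \Wildcard\ c'$, so the paper's definition of $c \notin \Pi$ does not exclude $c = e'_1\ r'\ e'_2$. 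For \RuleFont{Weak} the missing subcase is harmless (then the base microcosm equals $M$ and the premise is literally the goal), but for \RuleFont{Up-W} with $c = e'_1 < e'_2$ you are left needing to strip the hypothesised $e'_1 <> e'_2$ out of the premise derivation, and $c \notin \Pi$ gives you no control over occurrences of \emph{that} correspondence; this subcase needs a separate argument, or an explicit convention that microcosms reached by addition and by update are syntactically distinguishable.

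Second, the transfer of the \RuleFont{Un-3} side condition is not ``for free'': to pass from ``no bridging $e'$ over $M+c$'' to ``no bridging $e'$ over $M$'' you need, contrapositively, that $M \vdash e_1\ \Wildcard\ e'$ implies $(M+c) \vdash e_1\ \Wildcard\ e'$, i.e.\ that \emph{some} verdict survives the addition. For verdicts other than $?$ this is exactly \RuleFont{Weak}, but there is no weakening rule for $?$, so you must argue separately that an unknown pair either upgrades via \RuleFont{Strng} or remains derivably unknown via the \RuleFont{Un-$n$} rules; that deserves to be isolated as a small lemma rather than asserted. Finally, be aware that if the added correspondence is \emph{internal} then, by (\ref{Misc:M2}) together with the convention that $e_1 < e_2 \in I$ whenever $e_1, e_2 \in I$ and $W \vDash e_1 < e_2$, inserting one event into the chain $I$ silently creates initial correspondences other than $c$ itself, and your key premise---that every \RuleFont{Init} leaf whose correspondence differs from $c$ already lies in $M$---fails. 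That is arguably an ambiguity in the paper's definition of $M + c$ rather than an error of yours, but your proof should state the assumption ``$M+c$ adds exactly the single correspondence $c$'' explicitly, since the whole induction rests on it.
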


Informally, the above lemma gives a criterion for shrinking the size of a derivation tree of online decisions.
Lemma~\ref{Lemm:No.Use.Shrink} will be used in the proof of Lemma~\ref{Lemm:Analog.No.Use}.

Fundamental results about online decision making follow.
Theorem~\ref{Theo:Rules.Comp} is on its computability.
Then, Theorem~\ref{Theo:Micro.Rules.Const} proves consistency.
At last, Lemma~\ref{Lemm:WoE.More.Accurate.Micro} and Corollary~\ref{Corr:Mu.Hats} focus on the relative accuracy of online decision making.

\begin{theorem} \label{Theo:Rules.Comp}
  Online decision making is computable:
  For every distinct pair of events $e_1$ and $e_2$ and microcosm $M$, in finite number of steps, the relation $r$ for which $M \vdash e_1\ r\ e_2$ can be found, if any.
\end{theorem}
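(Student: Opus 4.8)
The plan is to show that the set of candidate relations is finite and that derivability of each candidate judgement is decidable, then combine the two. The key observation is that every rule in Fig.~\ref{Fig:Rules} other than \RuleFont{Strng}, \RuleFont{Weak}, \RuleFont{Up-S}, and \RuleFont{Up-W} operates "within" a fixed microcosm, and those four remaining rules only relate a microcosm to a strictly larger (or equally-large, updated) one. Since $M$ is built up from $\varnothing$-like microcosms by finitely many additions and updates (by (\ref{Misc:M1}) and (\ref{Misc:M3}), $I$ and $E$ are finite), a derivation of $M \vdash e_1\ r\ e_2$ can be analysed by induction on the construction of $M$: the last "evolution" step (if any) is one of \RuleFont{Strng}/\RuleFont{Weak}/\RuleFont{Up-S}/\RuleFont{Up-W}, whose premise is a judgement about the predecessor microcosm $M'$ with $|I(M')| + |E(M')| < |I(M)| + |E(M)|$ or with one fewer $<>$ instance. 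So it suffices to give a decision procedure for the "static" fragment --- derivations using only \RuleFont{In-OK}, \RuleFont{Un-Sym}, \RuleFont{Un-1}, \RuleFont{Un-2}, \RuleFont{Un-3}, \RuleFont{Un-4} over a fixed $M$ --- and then fold in the finitely many evolution layers.

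\textbf{First I would} handle the static fragment. Fix $M$ and the query pair $e_1, e_2$. If $e_1 \notin M$ or $e_2 \notin M$, then by \RuleFont{Un-4} the answer is $?$ (and one checks no other relation is derivable, since all other rules require membership). Otherwise both events lie in the finite set $I \cup \{e : e \in E\}$, call it $V$, with $|V| < \aleph_0$. The initial judgements $M \InitSays e\ r\ e'$ are decidable: \RuleFont{Init} reads off $I$ and $E$, and \RuleFont{In-Tr}/\RuleFont{Co-Sym}/\RuleFont{CR-Sym} generate only finitely many new facts since they stay within $V \times (R \cup \{<>\}) \times V$, a finite set --- so one computes the closure in finitely many steps (this is essentially transitive closure plus two symmetry closures). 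Given the (finite, computed) set of initial judgements, the rules \RuleFont{Un-1}, \RuleFont{Un-2}, \RuleFont{Un-3} all have side conditions of the form $M \not\InitSays e_1\ \Wildcard\ e_3$ or existential quantification over $e' \in M$ --- all decidable now --- and \RuleFont{Un-Sym} is just symmetry. One shows that the set of derivable $?$-judgements is again obtained by a finite closure over the finite set $V \times V$. Hence for the static fragment one can decide, for each $r \in R \cup \{<>, ?\}$, whether $M \vdash e_1\ r\ e_2$, and by Proposition~\ref{Prop:WoE.Const} together with the design of the rules at most one accurate/inaccurate verdict survives.

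\textbf{Then I would} reassemble: write $M$ as $M_k$ where $M_0$ is obtained from $\varnothing$-correspondences only and $M_{i+1}$ is $M_i$ extended by one addition or update. Prove by induction on $k$ that "for all $e_1, e_2$, the relation $r$ with $M_k \vdash e_1\ r\ e_2$ is computable." The base case $k = 0$ is the static fragment (over a microcosm with no external correspondences and a chain $I$). For the inductive step, a derivation of $M_{k} \vdash e_1\ r\ e_2$ either uses no evolution rule whose conclusion-microcosm is $M_k$ --- then it is a static derivation over $M_k$, handled as above, possibly invoking static-over-$M_k$ plus appeals to $M_{k-1}$-judgements which are computable by hypothesis --- or its last relevant step is \RuleFont{Strng}, \RuleFont{Weak}, \RuleFont{Up-S}, or \RuleFont{Up-W}, each of which has premises that are judgements over $M_{k-1}$ (computable by hypothesis) plus side conditions $W \vDash^* \cdots$ or $W \vDash \cdots$ which, since the query and the one new correspondence concern finitely many events, are decidable against the (axiomatically consistent, per (\ref{Misc:W5})) world $W$. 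Enumerating the finitely many ways the last step can look yields the algorithm.

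\textbf{The main obstacle} I anticipate is making precise and verifying that the "static closure" genuinely terminates --- i.e., that iterating \RuleFont{In-Tr}, \RuleFont{Co-Sym}, \RuleFont{CR-Sym}, \RuleFont{Un-Sym}, \RuleFont{Un-1}, \RuleFont{Un-2}, \RuleFont{Un-3} reaches a fixpoint --- and, more delicately, that a derivation cannot "cheat" by detouring through strictly larger microcosms and coming back. The second point needs an argument that the only rules with a conclusion about $M$ whose premises mention a different microcosm are the four evolution rules, and in each of those the other microcosm is the unique predecessor $M_{k-1}$ (not some unrelated, possibly larger microcosm); this is where one must read the rule shapes in Fig.~\ref{Fig:Rules} carefully --- in \RuleFont{Weak} and \RuleFont{Up-W}, for instance, the premise microcosm $M$ and the added/updated correspondence are arbitrary, so one must argue that \emph{within a derivation of a fixed target} $M_k \vdash e_1\ r\ e_2$ only finitely many predecessor microcosms can occur, all of them "below" $M_k$ in the construction order, so that a well-founded induction on $|I| + |E|$ (with a secondary measure counting $<>$ instances for the update rules) goes through.
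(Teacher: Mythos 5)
Your proposal is correct, and it reaches the same destination as the paper's proof, but by a noticeably more explicit route. The paper's argument is a four-sentence "rule-based induction on $M \vdash e_1\ r\ e_2$" that simply \emph{assumes} "a mechanism for preventing infinite trial of the symmetry rules" and "book-keeping to prevent self-lookup over seeking an intermediate event in \RuleFont{Un-$n$}"; the entire burden of termination is discharged by those two assumptions. What you do differently is to make that burden visible and then actually carry it: you stratify the rules into a static fragment (closed over the finite set $V \times (R \cup \{<>, ?\}) \times V$, so the closure terminates for Datalog-style reasons) and an evolution fragment (\RuleFont{Strng}, \RuleFont{Weak}, \RuleFont{Up-S}, \RuleFont{Up-W}), and you supply the well-founded measure --- $|I|+|E|$ with a secondary count of $<>$ instances for the update rules --- that guarantees a derivation cannot detour through larger microcosms. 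That measure is exactly the content of the paper's unexamined "book-keeping," so your version buys a proof one could actually implement, at the cost of length. One remark: your "main obstacle" is real and is resolved by neither proof. The premise of \RuleFont{Un-3} is a \emph{negative} occurrence of $\vdash$ over the same microcosm $M$ (not merely of $\InitSays$, which your finite closure does decide), so the $?$-fragment is not defined by a monotone operator and "reaches a fixpoint" needs either a stratification argument (e.g., first fix the non-$?$ judgements via Lemma~\ref{Lemm:Online.Impl.Init}, then saturate the $?$-judgements with the \RuleFont{Un-3} premise interpreted against the already-computed positive part) or the paper's appeal to ad hoc self-lookup prevention. Flagging it is the right instinct; to be fully rigorous you would need to pick one of these resolutions rather than leave it as an anticipated difficulty.
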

\begin{proof}
  Let $p(e_1, e_2) = \exists r \in R \cup \{<>, ?\}.\ M \vdash e_1\ r\ e_2$.
  The result is trivial when no rule applies because, then, $p(e_1, e_2) = \bot$ in zero steps.
  Otherwise, we assume availability of a mechanism for preventing infinite trial of the symmetry rules.
  Similarly, we assume a book-keeping to prevent self-lookup over seeking an intermediate event in the case of \RuleFont{Un-$n$}, where $n \in \{1, 2, 3\}$.
  The proof is by rule-based induction on $M \vdash e_1\ r\ e_2$.
\end{proof}


Here is a note on the computational complexity of the online decision making.
Let $m$ be the size of a microcosm.
Taken na\"{i}vely, the rules in Fig.~\ref{Fig:Rules} give rise to exponential complexity w.r.t. $m$.
Using a simple $m \times m$ memoisation matrix, however, one can reduce that complexity to quadratic w.r.t. $m$.
Note that, having only a partial knowledge, being quadratic w.r.t. the size of a microcosm is far less than quadratic w.r.t. the size of a world of events as a whole.
This proves our model practically more useful than the classical holistic models.
In the presence of the above matrix, furthermore, maintaining (\ref{Misc:M4}) upon arrival of new correspondences is DLOG-Complete w.r.t. $m$ \cite[\S5.7]{Yap:1998}.

\begin{theorem} \label{Theo:Micro.Rules.Const}
  Online decision making is consistent: $M \vdash e_1\ r\ e_2$ and $M \vdash e_1\ r'\ e_2$ imply $r = r'$.
\end{theorem}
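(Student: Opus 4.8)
The plan is to reduce consistency of $\vdash$ to consistency of $\InitSays$, using Lemma~\ref{Lemm:Online.Impl.Init} to collapse every ``definite'' verdict (relation in $R \cup \{<>\}$) onto an initial judgement, and then showing separately that the ``unknown'' relation $?$ is incompatible with every definite relation on the same ordered pair. The overall argument is a case analysis on the last rules of the two given derivations, supported by a handful of side inductions.

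\textbf{First}, I would establish consistency of initial judgements: if $M \vartriangleleft W$ and $M \InitSays e_1\ r\ e_2$ and $M \InitSays e_1\ r'\ e_2$ with $r, r' \in R \cup \{<>\}$, then $r = r'$. This rests on a soundness lemma proved by rule-based induction on Fig.~\ref{Fig:Init.Rules}, namely $M \InitSays e_1\ r\ e_2$ implies $W \vDash^* e_1\ r\ e_2$; the only non-routine step is \RuleFont{In-Tr}, which uses transitivity of $<$ in $W$ together with the fact that $\vDash^*$ can certify a $<$ only through $\vDash$. Combining soundness with Proposition~\ref{Prop:WoE.Const} rules out two distinct $R$-relations, and also $\parallel$ against $<>$. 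The one remaining clash, $M \InitSays e_1 < e_2$ together with $M \InitSays e_1 <> e_2$, is excluded by (\ref{Misc:M4}): a short induction shows that $M \InitSays e_1 < e_2$ produces a $<_{|M}$-chain from $e_1$ to $e_2$, whereas $M \InitSays e_1 <> e_2$ forces a $<>$-correspondence between $e_1$ and $e_2$ into $E(M)$, and these two facts are incompatible with (\ref{Misc:M4}). Along the way I also want the easy side-lemma $M \InitSays e_1\ r\ e_2 \Rightarrow e_1, e_2 \in M$.

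\textbf{Second}, I would prove the key lemma: if $M \vdash e_1\ ?\ e_2$ then $M \not\InitSays e_1\ \Wildcard\ e_2$ and $M \not\InitSays e_2\ \Wildcard\ e_1$. Since \RuleFont{In-OK} and every evolution rule (\RuleFont{Strng}, \RuleFont{Weak}, \RuleFont{Up-S}, \RuleFont{Up-W}) yield only definite relations, a $?$-verdict must be concluded by one of \RuleFont{Un-Sym}, \RuleFont{Un-1}, \RuleFont{Un-2}, \RuleFont{Un-3}, \RuleFont{Un-4}, so the induction has only these five cases. The case \RuleFont{Un-4} uses $e_1 \notin M$ with the membership side-lemma; \RuleFont{Un-Sym} is immediate from the inductive hypothesis with $e_1, e_2$ swapped; and for \RuleFont{Un-1}--\RuleFont{Un-3} the explicit premise $M \not\InitSays e_1\ \Wildcard\ e_2$ gives one half directly, while for the other half an initial $e_2\ \sigma\ e_1$ with $\sigma \in \{\parallel, <>\}$ would, by \RuleFont{Co-Sym}/\RuleFont{CR-Sym}, contradict that same premise, and the case $\sigma = {<}$ is knocked out by feeding the inductive hypotheses on the intermediate-event sub-derivations into the chain/(\ref{Misc:M4}) argument of the first phase.

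\textbf{Finally}, the theorem follows: given $M \vdash e_1\ r\ e_2$ and $M \vdash e_1\ r'\ e_2$, if $r = r' = {?}$ we are done; if both are definite, Lemma~\ref{Lemm:Online.Impl.Init} yields $M \InitSays e_1\ r\ e_2$ and $M \InitSays e_1\ r'\ e_2$, whence $r = r'$ by the first phase; and if exactly one is $?$, say $r = {?}$, then Lemma~\ref{Lemm:Online.Impl.Init} gives $M \InitSays e_1\ r'\ e_2$, so $M \InitSays e_1\ \Wildcard\ e_2$, contradicting the key lemma. The hard part is the \RuleFont{Un-1}--\RuleFont{Un-3} sub-case of the key lemma --- ruling out a ``hidden'' reverse initial order $M \InitSays e_2 < e_1$ underneath a $?$-verdict obtained through an intermediate event --- which is where I expect to lean hardest on (\ref{Misc:M4}); choosing an induction measure that keeps the intermediate-event sub-derivations strictly smaller, and pinning down precisely how ``$M \vdash$'' is to be read inside the non-monotone premise of \RuleFont{Un-3}, are the delicate points.
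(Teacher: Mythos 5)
Your overall architecture---collapsing every definite verdict onto $\InitSays$ via Lemma~\ref{Lemm:Online.Impl.Init}, proving $\InitSays$ consistent, and separately showing that a $?$-verdict excludes any initial judgement on the pair---is a genuinely different and in principle cleaner decomposition than the paper's, which instead runs a single induction with a case analysis on the last rules of \emph{both} derivations. Your first and third phases are sound (modulo reading the $<_{|M}$-chains of (\ref{Misc:M4}) as ranging over $<$-correspondences in $E$ as well as $I$; otherwise $\{e_1 < e_2,\ e_1 <> e_2\} \subseteq E$ would be a legal microcosm and even $\InitSays$ would be inconsistent). The gap is exactly where you predicted it, and the mechanism you propose there does not work. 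Take $W$ with $e_3 < e_1 < e_2$ (hence $e_3 < e_2$) and $M = (\varnothing, \{e_1 <> e_2,\ e_2 <> e_3,\ e_3 < e_1\})$. This satisfies (\ref{Misc:M1})--(\ref{Misc:M4}): the only $<$-link known to $M$ is $e_3 < e_1$, so no correspondence in $E$ admits a $<_{|M}$-chain between its own endpoints. The initial judgements of $M$ are the three stored correspondences plus their \RuleFont{Co-Sym}/\RuleFont{CR-Sym} flips; since there is no symmetry rule for $<$, nothing of the form $M \InitSays e_1\ \Wildcard\ e_3$ is derivable. Hence \RuleFont{Un-1}, with $r = {<>}$ and intermediate event $e_2$, yields $M \vdash e_1\ ?\ e_3$, while $M \InitSays e_3 < e_1$. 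Your key lemma fails here, and no appeal to (\ref{Misc:M4}) can rescue it: (\ref{Misc:M4}) only forbids a $<$-chain between the two endpoints of a correspondence stored in $E$, whereas here the offending $<$ \emph{is} the correspondence between the endpoints of the $?$-verdict and the two $<>$'s hang off the intermediate event, so no clause of (\ref{Misc:M4}) is triggered.

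The only repair is to read the premise $M \not\InitSays e_1\ \Wildcard\ e_3$ of \RuleFont{Un-1}--\RuleFont{Un-3} as excluding initial correspondences in \emph{both} orders---which the paper's gloss ``the lack of any initial correspondence between $e_1$ and $e_2$'' arguably licenses, and which the theorem in fact requires: otherwise \RuleFont{Un-Sym} turns the example above into $M \vdash e_3\ ?\ e_1$ alongside $M \vdash e_3 < e_1$, refuting Theorem~\ref{Theo:Micro.Rules.Const} itself. Under that reading your delicate case evaporates: both halves of the key lemma come directly from the rule premise, and no chain/(\ref{Misc:M4}) argument is needed anywhere in phase two. So the concrete fix is not to \emph{derive} the reverse-order exclusion but to build it into the semantics of the \Wildcard-negated premise; as written, the step you lean on would fail.
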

\begin{proof}
  Let $\Pi = M \vdash e_1\ r\ e_2$ and $\Pi' = M \vdash e_1\ r'\ e_2$.
  The proof is by rule-based induction on $\Pi$, namely, by case analysis of $\LastRuleOf{\Pi}$:
  \begin{itemize}

    \item \RuleFont{Un-$n$} for $n \in \{1, 2, 3\}$.\quad
      In all those cases, as a part of the hypotheses, $M \not\InitSays e_1\ \_\ e_2$.
      Hence, $\LastRuleOf{\Pi'} \neq \RuleFont{In-OK}$.
      Furthermore, $\LastRuleOf{\Pi'} \neq \RuleFont{Up-S}$ (because, then, $r' = <$ and $M \InitSays e_1 < e_2$) and $\LastRuleOf{\Pi'} \neq \RuleFont{Up-W}$ (because, then, $r' \neq\ ?$, and, by Lemma~\ref{Lemm:Online.Impl.Init}, $M \InitSays e_1\ r'\ e_2$).
      Likewise, $\LastRuleOf{\Pi'} \neq \RuleFont{Strng}$ either because, then, $M \InitSays e_1\ r'\ e_2$ for $M = (\Wildcard\ + e_1\ r'\ e_2)$.
      We claim that the last rule in $M \vdash e_1\ r'\ e_2$ cannot be \RuleFont{Weak} either, and, the result follows because all the remaining rules imply that $r'\ =\ ?$.

      We now prove our last claim.
      If the last rule in $M \vdash e_1\ r'\ e_2$ is to be \RuleFont{Weak}, there exists a microcosm $M'$ such that $M = (M' + e'_1\ \Wildcard\ e'_2)$ and $M' \vdash e_1\ r'\ e_2$.
      Besides, $r'\ \neq\ ?$, which, by Lemma~\ref{Lemm:Online.Impl.Init}, implies $M' \InitSays e_1\ r'\ e_2$.
      This is, however, a contradiction because, then $M \InitSays e_1\ r'\ e_2$.

    \item \RuleFont{Un-4}.\quad
      When $e \notin M$ and $e \neq e'$, there essentially is no other rule that can apply than \RuleFont{Un-4}.
      That is, the last rule for $M \vdash e_1\ r'\ e_2$ too needs to be \RuleFont{Un-4} and $r'\ =\ ?$.

%

  \end{itemize}
  We drop the remaining cases due to space restrictions.
\end{proof}

\begin{definition} \label{Defn:More.Accurate}
  For a pair of relations $r, r' \in R \cup \{<>, ?\}$, write $r \sqsubset r'$ -- for $r$ is at most as accurate as $r'$ -- when:
  (i) $r'\ \neq\ ?$ and $r\ =\ ?$,
  (ii) $r'\ =\ <$ and $r\ = <>$,
  (iii) $r'\ =\ <^{-1}$ and $r\ =\ <>$,
  (iv) $r\ = r'\ =\ <$, and
  (v) $r\ = r'\ =\ \parallel$.
  Write $\sqsubseteq$ for the reflexive closure of $\sqsubset$.
\end{definition}

The following result states that a microcosm always \textbf{approximate}s its enclosing world of event:
For every pair of events, when the relation a microcosm attributes to the pair does not exactly coincide with that of its enclosing world of events, the microcosm is only less accurate.
This is the essence of our model being weaker than the mainstream practice where every device is exactly as accurate as its enclosing ecosystem.

\begin{lemma} \label{Lemm:WoE.More.Accurate.Micro}
  Let $M \vartriangleleft W$.
  Suppose also that $W \vDash e_1\ r_W\ e_2$ and $M \vdash e_1\ r_M\ e_2$.
  Then, $r_M \sqsubset r_W$.
\end{lemma}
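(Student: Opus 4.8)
The plan is to push everything back to the \emph{initial} judgements of $M$ using Lemma~\ref{Lemm:Online.Impl.Init}, and then to read the $W$-relation off the structure of the microcosm together with the built-in agreement between a microcosm and its enclosing world. First I would dispose of the case $r_M\ =\ ?$: by (\ref{Misc:W5}) the world of events attributes an accurate relation to every distinct pair, so in particular $r_W\ \neq\ ?$, and clause~(i) of Definition~\ref{Defn:More.Accurate} gives $r_M \sqsubset r_W$ at once. So the content of the lemma is the case $r_M \in R \cup \{<>\}$; there Lemma~\ref{Lemm:Online.Impl.Init} upgrades $M \vdash e_1\ r_M\ e_2$ to $M \InitSays e_1\ r_M\ e_2$. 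This is the crucial reduction: it lets us discard all of \RuleFont{Strng}, \RuleFont{Weak}, \RuleFont{Up-S}, \RuleFont{Up-W} and argue only about the four rules of Fig.~\ref{Fig:Init.Rules}.

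Next I would establish, by rule induction on the derivation of $M \InitSays e_1\ r\ e_2$ with $r \in R \cup \{<>\}$, the auxiliary claim: if $r\ =\ <$ then $W \vDash e_1 < e_2$; if $r\ =\ \parallel$ then $W \vDash e_1 \parallel e_2$; and if $r\ = <>$ then $W \vDash e_1 < e_2$ or $W \vDash e_2 < e_1$. For \RuleFont{Init} we have $e_1\ r\ e_2 \in M$, i.e.\ it lies in $I(M)$ or in $E(M)$. If it lies in $I(M)$, the definition of ``$e_1 < e_2 \in I$'' forces $r\ =\ <$ together with $W \vDash e_1 < e_2$. If it lies in $E(M)$, then $E(M) \subseteq \mathcal{C}^*_W$ gives $W \vDash^* e_1\ r\ e_2$, and unfolding $\vDash^*$ produces exactly the three alternatives of the claim. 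The three inductive rules are immediate: \RuleFont{In-Tr} uses transitivity of $<$ on $W$ (\ref{Misc:W3}), \RuleFont{Co-Sym} uses symmetry of $\parallel$ on $W$ (\ref{Misc:W4}), and \RuleFont{CR-Sym} goes through because the disjunction ``$W \vDash e_1 < e_2$ or $W \vDash e_2 < e_1$'' is symmetric in its two events.

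Finally I would combine the pieces. If $r_M\ =\ <$, the auxiliary claim gives $W \vDash e_1 < e_2$, so $r_W\ =\ <$ by consistency of $W$ (Proposition~\ref{Prop:WoE.Const}) and clause~(iv) applies; if $r_M\ =\ \parallel$, likewise $r_W\ =\ \parallel$ and clause~(v) applies; if $r_M\ = <>$, the auxiliary claim splits on which direction $<$ runs in $W$, giving $r_W\ =\ <$ (clause~(ii)) or $r_W\ = <^{-1}$ (clause~(iii)), and in either case $r_M \sqsubset r_W$. I do not expect a genuine obstacle here; the only point needing care is that ``$W \vDash e_1\ r_W\ e_2$'' in the statement must be read with $r_W$ allowed to be $<^{-1}$ (equivalently, up to the orientation of the pair), precisely as clauses~(ii)--(iii) of Definition~\ref{Defn:More.Accurate} already anticipate. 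All the heavy lifting is done by Lemma~\ref{Lemm:Online.Impl.Init}, which collapses the online-decision rules down to the four simple initial-judgement rules.
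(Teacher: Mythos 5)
Your argument is correct, and since the paper states Lemma~\ref{Lemm:WoE.More.Accurate.Micro} without printing a proof, there is nothing to diverge from: the reduction via Lemma~\ref{Lemm:Online.Impl.Init} to the four rules of Fig.~\ref{Fig:Init.Rules}, followed by a soundness induction using (\ref{Misc:M1})--(\ref{Misc:M3}), the definition of $\vDash^*$, and (\ref{Misc:W3})--(\ref{Misc:W5}), is exactly the natural route the surrounding development sets up. Your closing caveat about reading $r_W$ up to orientation (allowing $<^{-1}$) is the right one to flag, as clauses (ii)--(iii) of Definition~\ref{Defn:More.Accurate} only make sense under that reading.
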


The next result says:
When two microcosms of the same world of events do not agree on a given pair of events, it only is that one of the two is more accurate than the other.
In other words, two microcosms of the same world of events will never attribute conflicting relations to any given pair of events.

\begin{corollary} \label{Corr:Mu.Hats}
  Let $M \vartriangleleft W$ and $M' \vartriangleleft W$ with $M \vdash e_1\ r\ e_2$ and $M' \vdash e_1\ r'\ e_2$.
  Then, $r \sqsubseteq r'$ or $r' \sqsubseteq r$.
\end{corollary}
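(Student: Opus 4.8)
The plan is to derive this corollary from Lemma~\ref{Lemm:WoE.More.Accurate.Micro} together with the trichotomy built into a world of events by axiom (\ref{Misc:W5}), which guarantees that the enclosing world $W$ attributes a unique accurate relation to the pair $e_1, e_2$.

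First I would invoke (\ref{Misc:W5}) (equivalently Proposition~\ref{Prop:WoE.Const}): since $e_1 \neq e_2$ — and distinctness holds because online decision making is only defined for distinct pairs, as stated in Theorem~\ref{Theo:Rules.Comp} — exactly one of $e_1 \parallel e_2$, $e_1 < e_2$, $e_2 < e_1$ holds for $W$. Call this relation $r_W \in R \cup \{<^{-1}\}$; note $r_W \neq\ ?$ and $r_W \neq\ <>$, i.e.\ $r_W$ is accurate (possibly in the inverse direction, which we may record as $W \vDash e_2 < e_1$). The point is that \emph{both} microcosms $M$ and $M'$ have the \emph{same} enclosing world $W$, hence face the same $r_W$.

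Next I would apply Lemma~\ref{Lemm:WoE.More.Accurate.Micro} twice, once to $M$ and once to $M'$: from $M \vartriangleleft W$, $W \vDash e_1\ r_W\ e_2$ and $M \vdash e_1\ r\ e_2$ we get $r \sqsubset r_W$; similarly $r' \sqsubset r_W$. (A small bookkeeping point: Lemma~\ref{Lemm:WoE.More.Accurate.Micro} is stated with $W \vDash e_1\ r_W\ e_2$, so if the unique $W$-relation is $e_2 < e_1$ one should apply the lemma with the events swapped, or appeal to symmetry of the roles; the relation $\sqsubset$ already anticipates this via clauses (iii) for $<^{-1}$.) So now both $r$ and $r'$ are $\sqsubset r_W$ for one and the same accurate $r_W$.

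Finally I would finish by a short case analysis on $r_W$, inspecting Definition~\ref{Defn:More.Accurate}: if $r_W\ =\ \parallel$, then $r \sqsubset \parallel$ forces $r\ =\ \parallel$ (clause (v) is the only one with $r'$-slot $\parallel$ other than the $?$ clause (i)) — wait, clause (i) allows $r\ =\ ?$; so $r \in \{?, \parallel\}$ and likewise $r' \in \{?, \parallel\}$, and in every combination one of $r \sqsubseteq r'$, $r' \sqsubseteq r$ holds since $?\sqsubset\parallel$ and each is $\sqsubseteq$ itself. If $r_W\ =\ <$ (resp.\ $<^{-1}$), then by clauses (i),(ii),(iv) (resp.\ (i),(iii),(iv)) we get $r, r' \in \{?, <>, <\}$ (resp.\ $\{?, <>, <\}$ read in the appropriate direction), a set on which $\sqsubset$ restricts to the total order $?\ \sqsubset\ <>\ \sqsubset\ <$; hence $r$ and $r'$ are comparable, giving $r \sqsubseteq r'$ or $r' \sqsubseteq r$. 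The only mild obstacle is the direction bookkeeping between $<$ and $<^{-1}$ in matching the statement of Lemma~\ref{Lemm:WoE.More.Accurate.Micro} to Definition~\ref{Defn:More.Accurate}; everything else is immediate once one observes that the shared $W$ pins down a single target relation that both microcosms under-approximate.
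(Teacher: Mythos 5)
Your proof is correct and is exactly the derivation the paper intends: the corollary is stated without proof as an immediate consequence of Lemma~\ref{Lemm:WoE.More.Accurate.Micro}, applied twice against the unique relation $r_W$ that (\ref{Misc:W5}) assigns to the pair in the shared world $W$, followed by the observation that the relations lying below a fixed accurate $r_W$ under $\sqsubset$ form a $\sqsubseteq$-chain. The direction-bookkeeping caveat you flag for $r_W = <^{-1}$ is harmless (there $r, r'$ can only be $?$ or $<>$, which are comparable), so no gap remains.
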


\section{Offline Decision Making} \label{Sect:Offline.Decision}

The algorithm presented in this section enables a microcosm to make new decisions without depending on new supply from the enclosing world of events.
As such, it suits a device required to perform offline computation.
Hence, the naming ``offline.''
Unlike our online algorithm that exclusively proves correspondences, our offline algorithm is based on cancelling possibilities.
That is, deducing it that certain correspondences cannot possibly hold between the given pair of events.
We say that the online decision making \textit{confirm}s, whereas the offline one (mostly) \textit{refute}s.

Sometimes, cancelling enough possibilities out will prove the only remaining correspondence (e.g., Fig.~\ref{Fig:WO2}).
But, even if that is not quite the case, cancelling one or more correspondences out is still useful (e.g., Fig.~\ref{Fig:WO1}):
It conveys the information that the given pair of events are \textbf{not} unknown to one another.
(See Lemma~\ref{Lemm:Sample.Refutation.1}.)
Most particularly, in such a scenario, it would be wrong to consider the pair concurrent.
That is in exact contrast with the common causality folklore that: `when one cannot confirm any correspondence between two events, one can safely [sic] consider them concurrent.'

In Fig.~\ref{Fig:WO1}, given that $e_1 \parallel e_2$ and $e_2 < e_3$, it cannot be that $e_3 < e_1$.
This is because, then, by transitivity of \textsf{happens-before}, $e_2 < e_3$ and $e_3 < e_1$, imply $e_2 < e_1$, contradicting $e_1 \parallel e_2$.
Fig.~\ref{Fig:WO2} rules $e_3 < e_1$ out similarly.
But, then, given that $e_1 <> e_3$, the implication is $e_1 < e_3$.
Note that the only correspondences that were available prior to concluding $e_3 \not< e_1$ (in Fig.~\ref{Fig:WO1}) and $e_1 < e_3$ (in Fig.~\ref{Fig:WO2}) were the black lines between $e_1$, $e_2$, and $e_3$.
No new correspondence was supplied over the arguments either.
The important observation to make, hence, is that such arguments do not depend on new supply from the enclosing world of events.
Offline decision making (Definition~\ref{Defn:Offline.Rules}) enables such arguments.

\begin{figure*}
  \hrule
  \centering
  \begin{subfigure}[b]{0.4\textwidth}
    \underline{Legend}:
    Two parallel lines between $e$ and $e'$ depicts $e \parallel e'$.
    Dotted arrows show hypothesised \textsf{happens-before}.
    Red lines show what goes wrong as a result of the hypotheses.
    Green arrow shows \textsf{happens-before} that was proved offline.
  \end{subfigure}
  \hfill
  \vline
  \hfill
  \begin{subfigure}[b]{0.27\textwidth}
      \centering
      \includegraphics[width=0.59\textwidth]{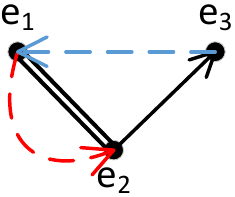}
      \caption{$[(e_1 \parallel e_2) \wedge (e_2 < e_3)] \Rightarrow e_3 \not< e_1$}
      \label{Fig:WO1}
  \end{subfigure}
  \hfill
  \vline
  \hfill
  \begin{subfigure}[b]{0.29\textwidth}
      \centering
      \includegraphics[width=0.55\textwidth]{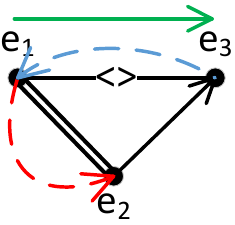}
      \caption{$[(e_1 \parallel e_2) \wedge (e_2 < e_3) \wedge (e_3 <> e_1)] \Rightarrow e_1 < e_3$}
      \label{Fig:WO2}
  \end{subfigure}
  \hrule
  \vspace{0.1em}
  \caption{Two Useful Offline Deductions}
  \label{Fig:Why.Offline}
  \vspace{-2em}
\end{figure*}

Before we can delve into offline decision making itself, we need to introduce a couple of notations.
For a microcosm $M$ and a pair of distinct events $e_1$ and $e_2$ such that $M \not\vdash e_1\ \Wildcard\ e_2$, write $(M \Supposing e_1\ r\ e_2)$ for a microcosm that is \textit{structurally} the same as $(M + e_1\ r\ e_2)$, namely, contains the exact same correspondences.
Despite their same structure, the former is meant to be used only when $e_1\ r\ e_2$ is not supplied by the enclosing world of events; it rather is $M$ with the \textit{hypothesis} that $e_1\ r\ e_2$ was also known by $M$.
That is, ``$\Supposing$'' is like the blue arrow in Fig.~\ref{Fig:WO1}.
Note the additional requirement of the former over the latter.
The latter only requires that $M \not\InitSays e_1\ \Wildcard\ e_2$.
In contrast, the former requires that $M \not\vdash e_1\ \Wildcard\ e_2$.
(By definition, the requirement for $(M \Supposing e_1\ r\ e_2)$ implies the requirement of $(M + e_1\ r\ e_2)$ too.
Hence, $(M \Supposing e_1\ r\ e_2)$ is well-defined.)
Note also that, by Theorem~\ref{Theo:Rules.Comp}, satisfiability of $M \not\vdash e_1\ \Wildcard\ e_2$ is computable.
When $M \not\vdash e_1\ \Wildcard\ e_2$, define $M\SuppUp{e_1\ r\ e_2}$ similarly for a microcosm that is structurally like $M[e_1\ r\ e_2]$; yet $e_1\ r\ e_2$ is not supplied by the enclosing world of events but is only hypothesised.
That is, ``$\SuppUp{.}$'' is like the blue arrow in Fig.~\ref{Fig:WO2}.

\begin{figure*}
  \hrule
  \vspace{0.1em}
  \begin{center}
    \framebox[1.1\width]{$M \vdash^* e_1\ \tilde{r}\ e_2$} where $\tilde{r} ::= r \mid \NotRel$
  \end{center}
  $$
    \begin{prooftree}
        M \vdash e_1\ r\ e_2
      \justifies
        M \vdash^* e_1\ r\ e_2
      \using
        \RuleFont{Onl-OK}
    \end{prooftree}\qquad
    \begin{prooftree}
        M \vdash^* e_1 \NotRel\ e_2 \quad r \in R
      \justifies
        M \vdash^* e_1 \NotUnknownTo\ e_2
      \using
        \RuleFont{Not-R}
    \end{prooftree}
  $$
  $$
    \begin{prooftree}
        M \vdash^* e_1\ \cancel{<>}\ e_2
      \justifies
        M \vdash^* e_1 \parallel e_2
      \using
        \RuleFont{Not-CR}
    \end{prooftree}\qquad
    \begin{prooftree}
        M \vdash^* e_1\ \NotConc\ e_2
      \justifies
        M \vdash^* e_1\ <>\ e_2
      \using
        \RuleFont{Not-Co}
    \end{prooftree}
  $$
  $$
    \begin{prooftree}
        (M \Supposing e_1\ r\ e_2) \vdash^* e'_1\ r'_1\ e'_2 \quad (M \Supposing e_1\ r\ e_2) \vdash^* e'_1\ r'_2\ e'_2 \quad r_1 \neq r_2
      \justifies
        M \vdash^* e_1 \NotRel\ e_2
      \using
        \RuleFont{Cntrd}
    \end{prooftree}
  $$
  $$
    \begin{prooftree}
        M\SuppUp{e_1\ r\ e_2} \vdash^* e'_1\ r'_1\ e'_2 \quad M\SuppUp{e_1\ r\ e_2} \vdash^* e'_1\ r'_2\ e'_2 \quad r_1 \neq r_2
      \justifies
        M \vdash^* e_1 \NotRel\ e_2
      \using
        \RuleFont{Up-Cntrd}
    \end{prooftree}
  $$
  $$
    \begin{prooftree}
        M \vdash^* e_1\ <>\ e_2 \quad M \vdash^* e_1\ \not < e_2
      \justifies
        M \vdash^* e_2 < e_1
      \using
        \RuleFont{Not-HB}
    \end{prooftree}\qquad
    \begin{prooftree}
        M \vdash^* e_1 \not < e_2 \quad M \vdash^* e_2 \not < e_1
      \justifies
        M \vdash^* e_1\ \cancel{<>}\ e_2
      \using
        \RuleFont{No-HBs}
    \end{prooftree}
  $$
  $$
    \begin{prooftree}
        M \vdash^* e_1 \NotUnknownTo\ e_2
      \justifies
        M \vdash^* e_2 \NotUnknownTo\ e_1
      \using
        \RuleFont{NU-Sym}
    \end{prooftree}\qquad
    \begin{prooftree}
        M \vdash^* e_1\ \cancel{<>}\ e_2
      \justifies
        M \vdash^* e_2\ \cancel{<>}\ e_1
      \using
        \RuleFont{NCR-Sym}
    \end{prooftree}\qquad
    \begin{prooftree}
        M \vdash^* e_1 \NotConc e_2
      \justifies
        M \vdash^* e_2 \NotConc e_1
      \using
        \RuleFont{NCo-Sym}
    \end{prooftree}
  $$
  \hrule
  \caption{Offline Decision Making}
  \vspace{-3em}
  \label{Fig:Offline.Rules}
\end{figure*}

\begin{definition} \label{Defn:Offline.Rules}
  Define the offline decision making procedure of a microcosm using the rules in Fig.~\ref{Fig:Offline.Rules}, where the judgements take the form $M \vdash^* e_1\ \tilde{r}\ e_2$, and $\tilde{r} ::= r\ \mid \not{r}$.
\end{definition}

The rules in Fig.~\ref{Fig:Offline.Rules} are fairly self-explanatory and we drop explanation to save space, except for the two key rules: \RuleFont{Cntrd} and \RuleFont{Up-Cntrd}.
If a hypothetical correspondence between a pair of events leads to two different conclusions about a single pair of distinct events, we have come to a contradiction, and, conclude the hypothesis to be false.
\RuleFont{Cntrd} manifests that for additions whilst \RuleFont{Up-Cntrd} does so for updates.

The online and offline decisions on the same pair of events will not conflict.
That is, online and offline decision making are consistent:

\begin{lemma} \label{Lemm:Offline.Not.Refute.Online}
  Let $e_1$ and $e_2$ be a pair of distinct events in $M$.
  Then: (i) $M \vdash e_1\ r\ e_2$ implies $M \not\vdash^* e_1 \NotRel\ \ e_2$, and
  (ii) $M \vdash^* e_1 \NotRel\ e_2$ implies $M \not\vdash e_1\ \Wildcard\ e_2$, in particular, $M \not\vdash e_1\ r\ e_2$.
\end{lemma}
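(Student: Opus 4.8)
The plan is to establish part~(ii) first, by rule-based induction on the offline derivation, and then read off part~(i) as an immediate contrapositive. Intuitively, every offline refutation rests on a \emph{hypothetical} step of the form $(M \Supposing \cdot)$ or $M\SuppUp{\cdot}$, and such a step is only licensed when $M$ is genuinely ignorant of the pair in question; part~(ii) is just the propagation of that observation up the derivation tree.

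\emph{Part (ii).} Let $\Pi^{*}$ be a derivation of $M \vdash^{*} e_1\ \NotRel\ e_2$; we induct on $\Pi^{*}$, doing case analysis on $\LastRuleOf{\Pi^{*}}$. Only seven rules of Fig.~\ref{Fig:Offline.Rules} have a negated conclusion: \RuleFont{Cntrd}, \RuleFont{Up-Cntrd}, \RuleFont{Not-R}, \RuleFont{No-HBs}, \RuleFont{NU-Sym}, \RuleFont{NCR-Sym} and \RuleFont{NCo-Sym}, so these are the only cases. The decisive ones are \RuleFont{Cntrd} and \RuleFont{Up-Cntrd}: writing down an instance of either whose conclusion is $M \vdash^{*} e_1\ \NotRel\ e_2$ presupposes that $(M \Supposing e_1\ r\ e_2)$, respectively $M\SuppUp{e_1\ r\ e_2}$, is a well-formed microcosm, and by the very definition of $\Supposing$ / $\SuppUp{\cdot}$ this holds exactly when $M \not\vdash e_1\ \Wildcard\ e_2$. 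Hence in these two cases the desired conclusion is handed to us directly by the side-condition that licenses the rule, with no appeal to the induction hypothesis, so there is no circularity.

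The remaining five rules only \emph{relay} a negated judgement, which is where the induction hypothesis does its work. In \RuleFont{Not-R} the premise is a negated judgement $M \vdash^{*} e_1\ \NotRel\ e_2$ on the very same ordered pair, and in \RuleFont{No-HBs} the first premise $M \vdash^{*} e_1\ \not<\ e_2$ is likewise a negated judgement on $(e_1, e_2)$; applying the induction hypothesis to that smaller sub-derivation yields $M \not\vdash e_1\ \Wildcard\ e_2$ at once. For the symmetry rules \RuleFont{NU-Sym}, \RuleFont{NCR-Sym} and \RuleFont{NCo-Sym} the premise is a negated judgement on the transposed pair $(e_2, e_1)$, so the induction hypothesis gives $M \not\vdash e_2\ \Wildcard\ e_1$, whence $M \not\vdash e_1\ \Wildcard\ e_2$ because $\vdash$-derivability of a correspondence does not depend on the order in which the pair is listed (the symmetric relations $?$, $<>$ and $\parallel$ by \RuleFont{Un-Sym} and by \RuleFont{CR-Sym}/\RuleFont{Co-Sym} transported to $\vdash$ via Lemma~\ref{Lemm:Online.Impl.Init} and \RuleFont{In-OK}, and a known $<$-edge in one direction still forces a verdict in the other).

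\emph{Part (i) and the main obstacle.} For~(i), suppose $M \vdash e_1\ r\ e_2$ while also $M \vdash^{*} e_1\ \NotRel\ e_2$; the latter is a negated judgement, so~(ii) gives $M \not\vdash e_1\ \Wildcard\ e_2$, contradicting $M \vdash e_1\ r\ e_2$, so in fact $M \not\vdash^{*} e_1\ \NotRel\ e_2$, and the ``in particular $M \not\vdash e_1\ r\ e_2$'' clause is immediate since $M \not\vdash e_1\ \Wildcard\ e_2$ forbids \emph{every} relation. The only genuinely delicate point is the symmetry step used inside part~(ii): one must argue carefully that $M \not\vdash e_2\ \Wildcard\ e_1$ entails $M \not\vdash e_1\ \Wildcard\ e_2$, i.e.\ that the universally-read negation of ``\Wildcard'' is symmetric; everything else reduces to inspecting which rules can introduce a negated judgement and invoking the $\Supposing$/$\SuppUp{\cdot}$ well-definedness side-conditions.
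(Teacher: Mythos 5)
The paper states this lemma without giving a proof, so there is nothing to match your argument against line by line; judged on its own terms, your decomposition---prove (ii) by rule induction on the offline derivation, then read (i) off as a contrapositive---is the natural one. Your identification of \RuleFont{Cntrd} and \RuleFont{Up-Cntrd} as the only rules that \emph{create} a negated judgement, with the well-definedness side-conditions of $\Supposing$ and $\SuppUp{\cdot}$ handing you $M \not\vdash e_1\ \Wildcard\ e_2$ outright, is exactly the crux that makes the lemma true, and the relaying cases \RuleFont{Not-R} and \RuleFont{No-HBs} are handled correctly.

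The gap is in the three symmetry cases, and you have in fact pointed at it yourself without closing it. There you need $M \not\vdash e_2\ \Wildcard\ e_1$ to entail $M \not\vdash e_1\ \Wildcard\ e_2$, i.e.\ contrapositively that $M \vdash e_1\ r\ e_2$ forces \emph{some} verdict on the reversed pair $(e_2, e_1)$. For $r \in \{?, <>, \parallel\}$ your route through \RuleFont{Un-Sym}, and through \RuleFont{Co-Sym}/\RuleFont{CR-Sym} combined with Lemma~\ref{Lemm:Online.Impl.Init} and \RuleFont{In-OK}, does work. But for $r\ =\ <$ the parenthetical ``a known $<$-edge in one direction still forces a verdict in the other'' is backed by nothing in the system: the online judgements range over $R \cup \{<>, ?\}$ only, $<^{-1}$ is not among them, and no rule of Fig.~\ref{Fig:Rules} or Fig.~\ref{Fig:Init.Rules} transports $M \vdash e_1 < e_2$ to a judgement on the ordered pair $(e_2, e_1)$. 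Indeed $M \InitSays e_1 < e_2$ is perfectly compatible with $M \not\InitSays e_2\ \Wildcard\ e_1$, since the wildcard quantifies over relations on the \emph{ordered} pair, and whether any \RuleFont{Un-$n$} rule then fires on $(e_2, e_1)$ turns on the negative premises of those rules, which you have not analysed. You therefore need, as a separate auxiliary lemma about the online system, that $\exists r.\ M \vdash e_1\ r\ e_2$ holds iff $\exists r.\ M \vdash e_2\ r\ e_1$; note that simply strengthening the induction hypothesis of (ii) to the conjunction $M \not\vdash e_1\ \Wildcard\ e_2 \wedge M \not\vdash e_2\ \Wildcard\ e_1$ does not escape this, because establishing the second conjunct in the \RuleFont{Cntrd}/\RuleFont{Up-Cntrd} base cases from the one-sided side-condition runs into the very same question. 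Until that auxiliary fact is proved, the \RuleFont{NU-Sym}, \RuleFont{NCR-Sym} and \RuleFont{NCo-Sym} cases of your induction are open.
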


\begin{lemma} \label{Lemm:Offline.Not.Disagree.Online}
  If $M \vdash e_1\ r\ e_2$ and $M \vdash^* e_1\ r'\ e_2$, then $r = r'$.
\end{lemma}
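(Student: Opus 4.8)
The plan is to fix an offline derivation $\Pi'$ of $M \vdash^* e_1\ r'\ e_2$ and argue by induction on $\Pi'$ with a case analysis on $\LastRuleOf{\Pi'}$, carrying the online derivation of $M \vdash e_1\ r\ e_2$ along as a standing hypothesis. The first step is purely syntactic. Since the conclusion $M \vdash^* e_1\ r'\ e_2$ carries a genuine (non-barred) relation $r' \in R \cup \{<>, ?\}$, its last rule must be one that concludes a judgement of exactly that shape; this rules out at once \RuleFont{Not-R}, \RuleFont{Cntrd}, \RuleFont{Up-Cntrd}, \RuleFont{No-HBs}, \RuleFont{NU-Sym}, \RuleFont{NCR-Sym} and \RuleFont{NCo-Sym}, each of which concludes a barred judgement. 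Hence $\LastRuleOf{\Pi'}$ is one of \RuleFont{Onl-OK}, \RuleFont{Not-CR}, \RuleFont{Not-Co} or \RuleFont{Not-HB}.

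If $\LastRuleOf{\Pi'} = \RuleFont{Onl-OK}$, its premise is $M \vdash e_1\ r'\ e_2$; together with the standing hypothesis $M \vdash e_1\ r\ e_2$, the online consistency Theorem~\ref{Theo:Micro.Rules.Const} yields $r = r'$, and we are done. The other three cases I expect to be \emph{vacuous}. Each of \RuleFont{Not-CR}, \RuleFont{Not-Co} and \RuleFont{Not-HB} consumes, among its premises, an offline \emph{refutation} about the very pair $e_1, e_2$ --- namely $M \vdash^* e_1\ \cancel{<>}\ e_2$, $M \vdash^* e_1\ \NotConc\ e_2$ and $M \vdash^* e_1\ \not< e_2$, respectively. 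But from $M \vdash e_1\ r\ e_2$, part (i) of Lemma~\ref{Lemm:Offline.Not.Refute.Online} says that $M$ derives no offline refutation concerning $e_1$ and $e_2$; so in each of the three cases the required premise is unavailable and the case cannot occur. This closes the induction and establishes $r = r'$.

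The one place to be careful is the matching of the premises of the \RuleFont{Not-CR}, \RuleFont{Not-Co} and \RuleFont{Not-HB} rules against Lemma~\ref{Lemm:Offline.Not.Refute.Online}: one has to note that all three \emph{unavoidably} feed on a barred judgement about the same unordered pair $\{e_1, e_2\}$ that appears in their conclusion, which is precisely what Lemma~\ref{Lemm:Offline.Not.Refute.Online}(i) forbids once $M$ has an online verdict on that pair. If one wished to avoid leaning on the full strength of that lemma, one could instead observe that $M \vdash e_1\ r\ e_2$ (hence, by online symmetry, $M \vdash e_2\ r\ e_1$) makes $(M \Supposing e_1\ \sigma\ e_2)$, $(M \Supposing e_2\ \sigma\ e_1)$, $M\SuppUp{e_1\ \sigma\ e_2}$ and $M\SuppUp{e_2\ \sigma\ e_1}$ all ill-defined for every $\sigma$, and then trace any barred judgement about $\{e_1, e_2\}$ back through the symmetry rules and \RuleFont{No-HBs} to an application of \RuleFont{Cntrd} or \RuleFont{Up-Cntrd} that hypothesises about $\{e_1, e_2\}$ --- but this just reproves Lemma~\ref{Lemm:Offline.Not.Refute.Online}, so citing it is the clean route.
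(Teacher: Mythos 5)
Your overall strategy --- case analysis on the last rule of the offline derivation, closing \RuleFont{Onl-OK} by Theorem~\ref{Theo:Micro.Rules.Const} and showing the other three shape-compatible rules vacuous --- is sound (the paper itself gives no proof to compare against), and your inventory of which rules can conclude a non-barred judgement is correct. The gap is in how you discharge the vacuous cases. You assert that part (i) of Lemma~\ref{Lemm:Offline.Not.Refute.Online} says ``$M$ derives no offline refutation concerning $e_1$ and $e_2$.'' It does not: part (i) only forbids $M \vdash^* e_1\ \NotRel\ e_2$ for the \emph{same} $r$ that the online judgement asserts. In the \RuleFont{Not-Co} case, for instance, the barred premise is $M \vdash^* e_1 \NotConc e_2$, a refutation of $\parallel$; if the online verdict is $r = {<}$, part (i) forbids only the refutation of $<$ and is silent about the refutation of $\parallel$, so the case is not closed as written. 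What you need is the contrapositive of part (ii): an offline refutation of \emph{any} relation between $e_1$ and $e_2$ implies $M \not\vdash e_1\ \Wildcard\ e_2$, so the standing hypothesis $M \vdash e_1\ r\ e_2$ kills every barred judgement about the ordered pair $(e_1, e_2)$. With that substitution, \RuleFont{Not-CR} and \RuleFont{Not-Co} are indeed vacuous.

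\RuleFont{Not-HB} needs one further line that your write-up glosses over. Matching its conclusion $M \vdash^* e_2 < e_1$ against the target $M \vdash^* e_1\ r'\ e_2$ forces $r' = {<}$ and makes the barred premise $M \vdash^* e_2 \not< e_1$ --- about the pair in the \emph{opposite} orientation, so part (ii) only refutes it given $M \vdash e_2\ \Wildcard\ e_1$. That follows from $M \vdash e_1\ r\ e_2$ via the symmetry rules when $r \in \{\parallel, <>, ?\}$, but not when $r = {<}$ (the system has no symmetry rule for $<$); fortunately in that residual subcase $r = {<} = r'$ and you are done directly. Two minor remarks: the induction on $\Pi'$ is never actually used --- every case closes at depth one --- and your fallback observation that $M \vdash e_1\ r\ e_2$ makes $(M \Supposing e_1\ \sigma\ e_2)$ and $M\SuppUp{e_1\ \sigma\ e_2}$ ill-defined is correct, but it is essentially the proof of part (ii) rather than of part (i).
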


The offline decision making can be used, for example, to mechanically conclude in the case of Fig.~\ref{Fig:WO1} that $M \vdash^* e_3 \NotUnknownTo\ e_1$:

\begin{lemma} \label{Lemm:Sample.Refutation.1}
  Let $M \vdash e_1 \parallel e_2$ and $M \vdash e_2 < e_3$ but $M \not\vdash e_3\ \Wildcard\ e_1$.
  Then, $M \vdash^* e_3 \NotUnknownTo\ e_1$.
\end{lemma}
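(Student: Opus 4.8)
The plan is to \emph{refute} the correspondence $e_3 < e_1$ by means of rule \RuleFont{Cntrd}, and then upgrade that refutation to $\NotUnknownTo$ via \RuleFont{Not-R}. First I would collect the obvious preliminaries. Since $M \vdash e_1 \parallel e_2$ and $M \vdash e_2 < e_3$ and both $\parallel$ and $<$ lie in $R$, Lemma~\ref{Lemm:Online.Impl.Init} gives $M \InitSays e_1 \parallel e_2$ and $M \InitSays e_2 < e_3$. Moreover $e_1 \neq e_3$: otherwise $M \vdash e_2 < e_1$ (from $M \vdash e_2 < e_3$) together with $M \vdash e_2 \parallel e_1$ (from $M \vdash e_1 \parallel e_2$ via \RuleFont{Co-Sym} and \RuleFont{In-OK}) would contradict the consistency of online decision making (Theorem~\ref{Theo:Micro.Rules.Const}). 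Combined with the hypothesis $M \not\vdash e_3\ \Wildcard\ e_1$, this is exactly what is needed to form the hypothetical microcosm $(M \Supposing e_3 < e_1)$.

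Inside $(M \Supposing e_3 < e_1)$ I would derive a clash on the single pair $(e_2, e_1)$. This microcosm carries every correspondence of $M$ together with $e_3 < e_1$, so the two initial-judgement derivations above transfer unchanged, and \RuleFont{Init} yields $(M \Supposing e_3 < e_1) \InitSays e_3 < e_1$. Then \RuleFont{In-Tr} applied to $e_2 < e_3$ and $e_3 < e_1$ gives $(M \Supposing e_3 < e_1) \InitSays e_2 < e_1$, while \RuleFont{Co-Sym} gives $(M \Supposing e_3 < e_1) \InitSays e_2 \parallel e_1$. Pushing both through \RuleFont{In-OK} and then \RuleFont{Onl-OK} yields $(M \Supposing e_3 < e_1) \vdash^* e_2 < e_1$ and $(M \Supposing e_3 < e_1) \vdash^* e_2 \parallel e_1$. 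Since $<$ and $\parallel$ are distinct, rule \RuleFont{Cntrd} --- instantiated with the supposed correspondence $e_3 < e_1$ and the conflicting pair $(e_2, e_1)$ --- concludes $M \vdash^* e_3 \not< e_1$. Finally, as $<$ lies in $R$, rule \RuleFont{Not-R} upgrades this to $M \vdash^* e_3 \NotUnknownTo\ e_1$, which is the claim.

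The step I expect to require the most care is the legitimacy of the hypothetical microcosm $(M \Supposing e_3 < e_1)$ and the transfer of the online/initial derivations into it: one must check that each premise used relies only on correspondences already present in $M$, so that it survives adjoining $e_3 < e_1$, and that $(M \Supposing e_3 < e_1)$ is the admissible ``what-if'' object demanded by \RuleFont{Cntrd}. Note in particular that adjoining $e_3 < e_1$ to the chain witnessing $M \InitSays e_2 < e_3$ produces a path bypassing the external correspondence $e_1 \parallel e_2$, so the argument is sound precisely under the intended reading in which the $\Supposing$-construction may form such a hypothetical configuration even though the genuine addition $(M + e_3 < e_1)$ would be forbidden by (\ref{Misc:M4}); everything else is a mechanical assembly of the rules of Figures~\ref{Fig:Init.Rules}, \ref{Fig:Rules}, and \ref{Fig:Offline.Rules}.
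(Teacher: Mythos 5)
Your proposal is correct and follows essentially the same route as the paper: the paper's proof is precisely the mechanical derivation of Fig.~\ref{Fig:Sample.Refutation}, which uses Lemma~\ref{Lemm:Online.Impl.Init} to descend to initial judgements, hypothesises $e_3 < e_1$ via ``$\Supposing$'', derives the clash between $e_2 < e_1$ (by \RuleFont{In-Tr}) and $e_2 \parallel e_1$ (by \RuleFont{Co-Sym}), and closes with \RuleFont{Cntrd} followed by \RuleFont{Not-R}. Your closing remark about the hypothetical configuration being tolerated by ``$\Supposing$'' even where a genuine addition would clash with (\ref{Misc:M4}) is a fair observation about the intended reading, but it does not change the argument.
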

\begin{proof}
  The mechanical proof comes in Fig.~\ref{Fig:Sample.Refutation}, where the derivation labelled ($*$) is Lemma~\ref{Lemm:Online.Impl.Init}.
\end{proof}

\begin{figure*}
  \hrule
  \vspace{0.1em}
  \begin{center}
    \includegraphics[width=\textwidth]{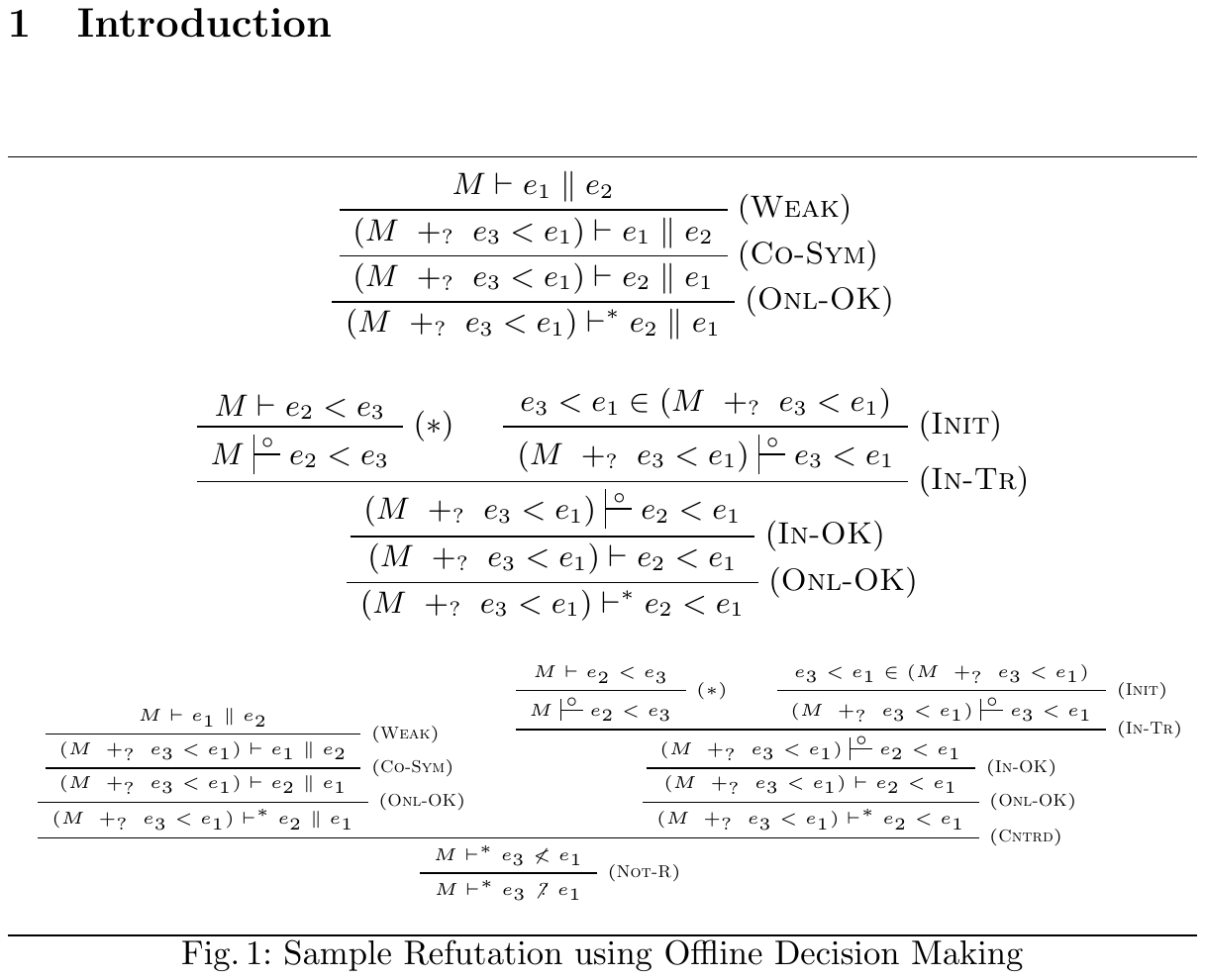}
  \end{center}
  \vspace{-0.5em}
  \hrule
  \caption{Mechanical Proof of Lemma~\ref{Lemm:Sample.Refutation.1}}
  \vspace{-2.5em}
  \label{Fig:Sample.Refutation}
\end{figure*}

Despite its merits, offline decision making is confronted with two problems:
Firstly, getting to refute the right correspondence may only be possible using human intelligence.
Although mechanical proofs like Fig.~\ref{Fig:Sample.Refutation} help a human-being legislate informal reasoning such as Fig.~\ref{Fig:Why.Offline}, how likely that is for a machine to produce that is not yet known.
Secondly, the search space for getting to a contradiction (and hence a refute) is exponential in the number of events known to a microcosm.
We are not aware of any technique for reducing that space.

\section{Forward Bisimilarity} \label{Sect:ForBiSim}

In this section, we present our first notion of microcosm bisimilarity.
We start by defining microcosm analogy (Definition~\ref{Defn:Micro.Analog}), namely, what exactly we mean when we say two microcosms agree on every correspondence.
Then, we show that such microcosms will evolve likewise when supplied with the exact same new single correspondence (Theorem~\ref{Theo:Analog.Bisimu}), i.e., they are forward bisimilar (Definition~\ref{Defn:For.Bisimilarity}).
The most important result of this section is Theorem~\ref{Theo:For.Order.Irrelevant}, which proves it that the order of arrival of causal information is irrelevant so long as the same correspondences are available to a pair of bisimilar microcosms.
Finally, Theorem~\ref{Theo:For.Bisimilarity} establishes the bisimilarity of analogy.
We call the bisimilarity of this section forward to contrast it with that of next section (Definition~\ref{Defn:Back.Bisimilarity}), which we call backward.

\begin{definition} \label{Defn:Micro.Analog}
  Call microcosms $M$ and $M'$ analogous -- write $M \approx M'$ -- when: $\forall e_1, e_2.\ M \vdash e_1\ r\ e_2 \Leftrightarrow M' \vdash e_1\ r\ e_2$.
\end{definition}

In words, two microcosms are analogous when they `agree on the correspondence between every pair of events.'
That can, for instance, be two replicas of a single data centre that are in the same state.
As another example consider a copy taken from a device before it temporarily dies.
As soon as the original device comes back to life, the original device and the copy would be analogous.
Interestingly enough, the order of arrival of the causal information to the original device is completely sporadic to the copy.
Note that Definition~\ref{Defn:Micro.Analog} has even no explicit mention of the enclosing worlds of events of the two microcosms.

\begin{definition} \label{Defn:For.Trans.Sys}
  Define \SingleC\ for the transition system $\ForwardTS = (\mathcal{M}_W, \mathcal{C}^*_W, \stackrel{.}{\rightarrow})$ such that $M \SingleC M'$ when $M' = (M + c)$ for some $c \in \mathcal{C}^*_W$.
  Call \ForwardTS\ the forward transition system of $W$.
\end{definition}

The above definition formalises our understanding of a microcosm evolving \textit{forward} with the arrival of new supply to it.
We now present a technical lemma for later use.

\begin{lemma} \label{Lemm:Analog.Init.Use}
  Suppose that $M \approx M'$.
  Then, $\Pi = (M + c) \InitSays e_1\ r\ e_2$ implies $(M' + c) \InitSays e_1\ r\ e_2$ when $c \in \Pi$.
\end{lemma}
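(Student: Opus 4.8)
The plan is to prove the statement (indeed, a slightly sharper version of it) by structural induction on the derivation $\Pi$ of $(M+c) \InitSays e_1\ r\ e_2$, using only the four rules of Fig.~\ref{Fig:Init.Rules}; as it turns out, the side condition $c \in \Pi$ is not actually consumed by the argument. First I would record two trivial auxiliary observations. \emph{(a) Monotonicity of $\InitSays$ under addition:} if $N \InitSays \varphi$ and $(N + c')$ is again a microcosm, then $(N+c') \InitSays \varphi$ --- among the rules of Fig.~\ref{Fig:Init.Rules} only \RuleFont{Init} inspects the microcosm, and its side condition ``$e\ r\ e' \in N$'' is preserved when $N$ is enlarged by a correspondence, so the very same derivation tree is a derivation over $N+c'$. \emph{(b) Analogy transfers initial judgements:} if $M \approx M'$ and $r \in R \cup \{<>\}$, then $M \InitSays e_1\ r\ e_2$ implies $M' \InitSays e_1\ r\ e_2$ --- by \RuleFont{In-OK} we get $M \vdash e_1\ r\ e_2$, by $M \approx M'$ we get $M' \vdash e_1\ r\ e_2$, and since $r \in R \cup \{<>\}$ Lemma~\ref{Lemm:Online.Impl.Init} yields $M' \InitSays e_1\ r\ e_2$.

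With (a) and (b) in hand the induction is short. If $\LastRuleOf{\Pi} = \RuleFont{Init}$, then $e_1\ r\ e_2 \in (M+c)$; if $e_1\ r\ e_2 = c$ then $e_1\ r\ e_2 \in (M'+c)$ and \RuleFont{Init} gives the claim directly, while if $e_1\ r\ e_2 \in M$ then \RuleFont{Init} gives $M \InitSays e_1\ r\ e_2$, hence $M' \InitSays e_1\ r\ e_2$ by (b), hence $(M'+c) \InitSays e_1\ r\ e_2$ by (a) --- the microcosm $(M'+c)$ being the one presupposed by the statement. If $\LastRuleOf{\Pi} = \RuleFont{In-Tr}$, with immediate subderivations $\Pi_1 : (M+c) \InitSays e_1 < e_2$ and $\Pi_2 : (M+c) \InitSays e_2 < e_3$ and conclusion $e_1 < e_3$, I apply the induction hypothesis to $\Pi_1$ and to $\Pi_2$ to obtain $(M'+c) \InitSays e_1 < e_2$ and $(M'+c) \InitSays e_2 < e_3$, and then re-apply \RuleFont{In-Tr}. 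The cases $\LastRuleOf{\Pi} \in \{\RuleFont{Co-Sym}, \RuleFont{CR-Sym}\}$ are handled identically: apply the induction hypothesis to the unique premise and re-apply the same symmetry rule over $(M'+c)$.

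I expect no genuine obstacle: the rule system for $\InitSays$ is small and, apart from \RuleFont{Init}, purely structural, so the induction is essentially bookkeeping. The only mildly delicate points are the two auxiliary observations --- in particular, noticing that analogy, which is phrased in terms of $\vdash$, can be pushed down to $\InitSays$ by combining \RuleFont{In-OK} with Lemma~\ref{Lemm:Online.Impl.Init} (legitimate precisely because every relation occurring in an initial judgement lies in $R \cup \{<>\}$), and the observation that $\InitSays$ never branches on non-membership, which is what makes monotonicity under addition immediate.
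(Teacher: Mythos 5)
Your proof is correct and takes essentially the same route as the paper, whose entire argument is ``by case distinction on $\LastRuleOf{\Pi}$''; you have simply supplied the routine details it omits, namely the induction through \RuleFont{In-Tr} and the symmetry rules, the monotonicity of $\InitSays$ under addition, and the transfer of initial judgements across $\approx$ via \RuleFont{In-OK} and Lemma~\ref{Lemm:Online.Impl.Init}.
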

\begin{proof}
  By case distinction on $\LastRuleOf{\Pi}$.
\end{proof}

The following two lemmata explore two different scenarios for forward evolution:
when the new supply is not used for deriving the correspondence between a given pair of events (Lemma~\ref{Lemm:Analog.No.Use}) and when it is (Lemma~\ref{Lemm:Analog.Use}).
Those two pave the road for Theorem~\ref{Theo:Analog.Bisimu}.

\begin{lemma} \label{Lemm:Analog.No.Use}
  Suppose that $M \approx M'$ and $M \SingleC (M + c)$.
  Then, $\Pi = (M + c) \vdash e_1\ r\ e_2$ implies $(M' + c) \vdash e_1\ r\ e_2$ when $c \notin \Pi$.
\end{lemma}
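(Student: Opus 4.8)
The plan is: first push $\Pi$ down to the unaugmented $M$, cross the analogy, then re‑weaken. Write $c = e'_1\ r'\ e'_2$. Since $M \SingleC (M+c)$ the microcosm $(M+c)$ is well formed, so $M \not\InitSays e'_1\ \Wildcard\ e'_2$, in particular $M \not\InitSays c$; this is exactly the side hypothesis of Lemma~\ref{Lemm:No.Use.Shrink}, which, applied to $\Pi$ with $c \notin \Pi$, gives $M \vdash e_1\ r\ e_2$, whence $M' \vdash e_1\ r\ e_2$ by Definition~\ref{Defn:Micro.Analog}. Everything now reduces to showing that adjoining $c$ to $M'$ preserves the verdict on $e_1, e_2$.

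When $r \neq\ ?$ this is uniform and immediate: $c \in \mathcal{C}^*_W$ means $W \vDash^* c$, so one application of \RuleFont{Weak} to $M' \vdash e_1\ r\ e_2$ (with $r \neq\ ?$) yields $(M'+c) \vdash e_1\ r\ e_2$. So assume $r =\ ?$; then $\LastRuleOf{\Pi}$ is one of the $?$‑concluding rules \RuleFont{Un-Sym}, \RuleFont{Un-1}, \RuleFont{Un-2}, \RuleFont{Un-3}, \RuleFont{Un-4}, and I would argue by induction on $\Pi$. Two facts are used throughout. (a) $M$ and $M'$ make the same initial judgements: if $M \InitSays a\ s\ b$ then $M \vdash a\ s\ b$ by \RuleFont{In-OK}, hence $M' \vdash a\ s\ b$ by analogy, hence $M' \InitSays a\ s\ b$ by Lemma~\ref{Lemm:Online.Impl.Init} (as $s \in R \cup \{<>\}$), and conversely; since $\InitSays$ is the closure of the stored correspondences under \RuleFont{In-Tr}, \RuleFont{Co-Sym}, \RuleFont{CR-Sym}, and this closure commutes with adjoining the one correspondence $c$, we also get $(M+c) \InitSays a\ s\ b \Leftrightarrow (M'+c) \InitSays a\ s\ b$. (b) Consequently, again invoking \RuleFont{In-OK} and Lemma~\ref{Lemm:Online.Impl.Init}, $(M+c)$ and $(M'+c)$ agree on all accurate/${<>}$‑verdicts, and — using (\ref{Misc:M2}) for the internal events — they carry the same events.

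For \RuleFont{Un-Sym}, \RuleFont{Un-1}, \RuleFont{Un-2} the derivational premises $\Pi_i$ again satisfy $c \notin \Pi_i$, so the induction hypothesis moves them to $(M'+c)$, the negative side conditions of the shape $(M+c) \not\InitSays e_1\ \Wildcard\ e_3$ move by (a), and reapplying the rule finishes. For \RuleFont{Un-4} one only needs $e_1 \notin (M'+c)$, which is (b) together with $e_1 \notin (M+c)$. The real obstacle is \RuleFont{Un-3}: its side condition ``no event of the microcosm corresponds to both $e_1$ and $e_2$'' is not witnessed by a sub‑derivation, so the hypothesis does not apply to it. Here the plan is to take a putative common‑event witness $e'$ for $(M'+c)$, say $(M'+c) \vdash e_1\ s_1\ e'$ and $(M'+c) \vdash e'\ s_2\ e_2$, and split: if neither $s_i$ is $?$ then (b) transports this very witness back to $(M+c)$, contradicting \RuleFont{Un-3}'s side condition there; if a $?$ occurs among $s_1, s_2$ then, in all but one orientation, \RuleFont{Un-1}/\RuleFont{Un-2} together with \RuleFont{Un-Sym} re‑derive $(M'+c) \vdash e_1\ ?\ e_2$ outright, so the conclusion holds regardless. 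The single stubborn configuration ($s_1 =\ ?$ with $s_2\ =\ <$, or its mirror) is eliminated by descending into the derivation of $(M'+c) \vdash e_1\ ?\ e'$, which is cleanest if Lemma~\ref{Lemm:Analog.No.Use} is proved in tandem with Lemma~\ref{Lemm:Analog.Use} (its $c \in \Pi$ counterpart), the pair jointly witnessing that $(M+c)$ and $(M'+c)$ agree even on $?$‑verdicts for strictly simpler inputs. Since the \RuleFont{Un-$n$} side conditions are non‑monotone, this induction should be stratified on the (finite) size of the microcosm rather than on derivation height; verifying that this stratification is well founded is the one genuinely delicate point, the remainder being rule‑chasing.
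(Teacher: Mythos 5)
Your opening moves match the paper's: Lemma~\ref{Lemm:No.Use.Shrink} pushes $\Pi$ down to $M \vdash e_1\ r\ e_2$ (your check that $M \not\InitSays c$ is a sensible precaution), analogy carries this to $M' \vdash e_1\ r\ e_2$, and for $r \neq\ ?$ a single application of \RuleFont{Weak} finishes. The divergence---and the gap---is in the $r =\ ?$ case. Your rule-by-rule induction founders exactly where you say it does, at \RuleFont{Un-3}, and the repair you sketch does not hold together: you propose to stratify a mutual induction with Lemma~\ref{Lemm:Analog.Use} ``on the (finite) size of the microcosm rather than on derivation height,'' but nothing in the argument ever decreases that measure---passing from $M'$ to $(M'+c)$ makes the microcosm \emph{larger}, and the step where you ``descend into the derivation of $(M'+c) \vdash e_1\ ?\ e'$'' is a descent in derivation height on precisely the judgement form ($?$-verdicts of $(M'+c)$) whose negative occurrence in the side condition of \RuleFont{Un-3} broke the height induction in the first place. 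As written, the well-foundedness you flag as ``the one genuinely delicate point'' is not a detail to be checked but the missing content of the proof.

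The paper closes the $r =\ ?$ case (indeed, all cases at once) without any case analysis on $\LastRuleOf{\Pi}$: suppose toward a contradiction that $(M'+c) \vdash e_1\ r'\ e_2$ with $r' \neq r$. Since $M' \vdash e_1\ r\ e_2$ already holds, the only rule that can alter the verdict on $(e_1, e_2)$ when $c$ is adjoined is \RuleFont{Strng}, which forces $r =\ ?$ and $r' \in R \cup \{<>\}$. But $M \approx M'$ gives $M \vdash e_1\ ?\ e_2$, so \RuleFont{Strng} applies on the $M$ side as well, yielding $(M+c) \vdash e_1\ r'\ e_2$---contradicting $\Pi = (M+c) \vdash e_1\ ?\ e_2$ by the consistency of $\vdash$ (Theorem~\ref{Theo:Micro.Rules.Const}). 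Consistency is the lever you were missing: it lets you refute a hypothetical divergent verdict wholesale instead of reconstructing the $?$-derivation inside $(M'+c)$.
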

\begin{proof}
  By Lemma~\ref{Lemm:No.Use.Shrink}, $(M + c) \vdash e_1\ r\ e_2$ and $c \notin \Pi$ imply $M \vdash e_1\ r\ e_2$.
  Given that $M \approx M'$, thus, $M' \vdash e_1\ r\ e_2$.
  Note, then, that $M' \not\vdash e_1\ \Wildcard\ e_2$ is not true.
  We claim that the result follows, i.e., $(M' + c) \vdash e_1\ r\ e_2$.

  Here is the proof of our claim.
  Suppose otherwise, namely, that $(M' + c) \vdash e_1\ r'\ e_2$ where $r \neq r'$.
  Then, the only rule that can take $M' \vdash e_1\ r\ e_2$ to $(M' + c) \vdash e_1\ r'\ e_2$ is \RuleFont{Strng}, in which case, $r\ =\ ?$ and $r' \in R \cup \{<>\}$.
  But, then, $M \vdash e_1\ ?\ e_2$ for $M \approx M'$.
  It follows by \RuleFont{Strng} that $(M + c) \vdash e_1\ r'\ e_2$.
  That, however, is a contradiction because, according to Theorem~\ref{Theo:Micro.Rules.Const}, $\vdash$ is consistent.
\end{proof}

\begin{lemma} \label{Lemm:Analog.Use}
  Suppose that $M \approx M'$ and $M \SingleC (M + c)$.
  Then, $\Pi = (M + c) \vdash e_1\ r\ e_2$ implies $(M' + c) \vdash e_1\ r\ e_2$ when $c \in \Pi$.
\end{lemma}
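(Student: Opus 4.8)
The plan is to prove Lemma~\ref{Lemm:Analog.Use} by induction on the derivation $\Pi = (M + c) \vdash e_1\ r\ e_2$, doing a case analysis on $\LastRuleOf{\Pi}$, with the standing assumption that the new correspondence $c$ actually \emph{does} appear in $\Pi$ (so the complementary case handled in Lemma~\ref{Lemm:Analog.No.Use} is excluded). The overall strategy mirrors the companion lemma: whenever a subderivation $\Pi_i$ of $\Pi$ does not contain $c$, I will invoke Lemma~\ref{Lemm:No.Use.Shrink} to conclude $M \vdash (\text{that judgement})$, then use $M \approx M'$ to transport it to $M'$, and finally re-apply the rule in question over $(M' + c)$; whenever $\Pi_i$ does contain $c$, I will appeal to the induction hypothesis to get the corresponding judgement over $(M' + c)$ directly. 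The base cases and the purely structural premises (side conditions of the form $M \not\InitSays e_1\ \Wildcard\ e_3$, $e \notin M$, etc.) transfer because $(M + c)$ and $(M' + c)$ have the same underlying correspondences on anything that also holds for $M$ versus $M'$; more precisely I will want a small observation that $M \approx M'$ together with $(M+c)$ well-defined forces $M' \not\InitSays e_1\ \Wildcard\ e_2$ exactly when $M \not\InitSays e_1\ \Wildcard\ e_2$ in the relevant instances, which follows from $\InitSays$ being subsumed by $\vdash$ via \RuleFont{In-OK} plus Lemma~\ref{Lemm:Online.Impl.Init}.

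In more detail, I would organise the cases as follows. For \RuleFont{In-OK}: here $c \in \Pi$ means the initial judgement $(M+c) \InitSays e_1\ r\ e_2$ genuinely uses $c$, so Lemma~\ref{Lemm:Analog.Init.Use} applies directly to give $(M'+c) \InitSays e_1\ r\ e_2$, hence $(M'+c) \vdash e_1\ r\ e_2$ by \RuleFont{In-OK}. For the symmetry rules \RuleFont{Un-Sym}, \RuleFont{Co-Sym}-type transfers, and for \RuleFont{Un-1}, \RuleFont{Un-2}: split each premise subderivation according to whether it contains $c$, using IH or the shrink-lemma-plus-analogy combination as above, and note the side conditions $M \not\InitSays e_1\ \Wildcard\ e_3$ transport; then reapply the same rule. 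For \RuleFont{Un-3} and \RuleFont{Un-4}: these have only negative/structural premises about which events lie in the microcosm and which judgements fail to hold; since $(M'+c)$ and $(M+c)$ agree on membership of events (both add the events occurring in $c$) and, by consistency (Theorem~\ref{Theo:Micro.Rules.Const}) together with analogy, on the relevant non-derivability, these conditions carry over and the rule re-applies. For \RuleFont{Strng}, \RuleFont{Weak}: the interesting subtlety is that in these rules the microcosm on the conclusion side is literally of the form $(N + c')$, so "$c \in \Pi$" interacts with the structure; I would peel off the added correspondence, apply IH to the residual derivation over $M$ versus $M'$, and then reinstate the $+$ step, using that $(M'+c)$ makes sense because $M' \not\InitSays e_1\ \Wildcard\ e_2$ holds by analogy with $M$. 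The update rules \RuleFont{Up-S}, \RuleFont{Up-W} are handled analogously, with $M[e_1 < e_2]$ in place of $M + c$, noting that $M \approx M'$ gives $M' \vdash e_1 <> e_2$ whenever $M \vdash e_1 <> e_2$, so $M'[e_1 < e_2]$ is likewise well-defined.

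I expect the main obstacle to be the \RuleFont{Strng}/\RuleFont{Weak}/\RuleFont{Up-S}/\RuleFont{Up-W} cases, because there the "$c$ is used" hypothesis is entangled with the fact that these rules themselves change the microcosm: one must be careful that the single added correspondence referenced in the rule either \emph{is} $c$ itself (so the last step is exactly the evolution $M \to M+c$, and the premise is a $\vdash$-judgement over $M$ which transports by analogy outright) or is some earlier addition, which would contradict $(M+c)$ being the result of adding \emph{one} correspondence to $M$ unless we track nesting carefully. A clean way to avoid the entanglement is to observe that if $\LastRuleOf{\Pi} \in \{\RuleFont{Strng}, \RuleFont{Weak}, \RuleFont{Up-S}, \RuleFont{Up-W}\}$ and $c \in \Pi$, then either the last evolution step \emph{is} the addition of $c$ (handled directly using $M \approx M'$ on the premise, which is a plain $M \vdash \ldots$ judgement, plus consistency to rule out a clash) or $c$ occurs strictly inside a premise subderivation (handled by IH). A secondary, purely bookkeeping, obstacle is confirming that all the negative side conditions really do transfer; I would isolate this once as a sub-claim --- "if $M \approx M'$ and $(M+c)$ is well-defined, then $(M'+c)$ is well-defined and $(M+c) \not\InitSays e_1\ \Wildcard\ e_2 \Leftrightarrow (M'+c) \not\InitSays e_1\ \Wildcard\ e_2$" --- proved from the definitions plus Lemmata~\ref{Lemm:Online.Impl.Init} and the consistency theorem, and then cite it uniformly across the cases.
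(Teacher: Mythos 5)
Your proposal matches the paper's proof essentially step for step: induction on the derivation with a case split on $\LastRuleOf{\Pi}$, Lemma~\ref{Lemm:Analog.Init.Use} for \RuleFont{In-OK}, the shrink-lemma/analogy combination (i.e., Lemma~\ref{Lemm:Analog.No.Use}) for premise subderivations not containing $c$ and the induction hypothesis for those that do, transfer of the $\not\InitSays$ side conditions via a consistency (Theorem~\ref{Theo:Micro.Rules.Const}) argument, and direct transport of the premise through $M \approx M'$ for \RuleFont{Strng} and \RuleFont{Weak}. The one divergence is \RuleFont{Un-4}, which the paper dispatches as vacuous ($c \in \Pi$ there would force $c = e_1\ ?\ e_2$, impossible since $c \in \mathcal{C}^*_W$) rather than by your claim that $(M+c)$ and $(M'+c)$ agree on event membership --- a claim analogy does not actually justify --- but since that case cannot arise, this does not affect correctness.
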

\begin{proof}
  Induction on the size of $\Pi$ by case distinction on \LastRuleOf{\Pi}.
  \begin{itemize}
    \item \RuleFont{In-OK}.\quad
      In this case, $(M + c) \InitSays e_1\ r\ e_2$.
      By Lemma~\ref{Lemm:Analog.Init.Use}, $(M' + c) \InitSays e_1\ r\ e_2$.
      Using an application of \RuleFont{In-OK}, one derives the desirable.

    \item \RuleFont{Un-1}.\quad
      In this case, $r\ =\ ?$, and, there exist $r' \in R \cup \{<>\}$ and an intermediate event $e$ such that $\Pi_1 = (M + c) \vdash e_1\ r'\ e$ and $\Pi_2 = (M + c) \vdash e\ r'\ e_2$.
      When $c \notin \Pi_1$, by Lemma~\ref{Lemm:Analog.No.Use}, $(M' + c) \vdash e_1\ r'\ e$.
      Otherwise, the same result is obtained by the inductive hypothesis.
      Based on whether $c \in \Pi_2$ or not, one obtains $(M' + c) \vdash e\ r'\ e_2$ similarly.
      The other hypothesis of this rule is $(M + c) \not\InitSays e_1\ \Wildcard\ e_2$.
      We claim that $(M' + c) \not\InitSays e_1\ \Wildcard\ e_2$.
      One derives the desirable using an application of \RuleFont{Un-1}.

      Here is the proof of our claim.
      Suppose otherwise.
      Then, $\Pi_\circ = (M' + c) \InitSays e_1\ r_\circ\ e_2$ for some $r_\circ \in R \cup \{<>\}$.
      When $c \in \Pi_\circ$, by Lemma~\ref{Lemm:Analog.Init.Use}, $(M + c) \InitSays e_1\ r_\circ\ e_2$.
      When $c \notin \Pi_\circ$, using an application of \RuleFont{In-OK}, one first gets $(M' + c) \vdash e_1\ r_\circ\ e_2$.
      Next, one uses Lemma~\ref{Lemm:Analog.No.Use} to conclude $(M + c) \vdash e_1\ r_\circ\ e_2$.
      Both cases, however, contradict consistency of $\vdash$ (c.f., Theorem~\ref{Theo:Micro.Rules.Const}).

    \item \RuleFont{Un-4}.\quad
      This case is not applicable.
      Here is why.
      Suppose otherwise.
      Then, $r\ =\ ?$.
      Furthermore, the only way for $c \in \Pi$ is that $c = e_1\ ?\ e_2$.
      But, that is not possible because, by Definition~\ref{Defn:For.Trans.Sys}, $M \SingleC (M + c)$ is only defined when $c \in \mathcal{C}^*_W$.
      That is, $W \vDash^* e_1\ ?\ e_2$, which cannot be.

    \item \RuleFont{Strng}.\quad
      In this case, $M \vdash e_1\ ?\ e_2$.
      Given that $M \approx M'$, it follows that $M' \vdash e_1\ ?\ e_2$.
      Using an application of \RuleFont{Strng}, then, $(M' + c) \vdash e_1\ r\ e_2$.

    \item \RuleFont{Weak}.\quad
      In this case, $M \vdash e_1\ r\ e_2$.
      Given that $M \approx M'$, it follows that $M' \vdash e_1\ r\ e_2$.
      Using an application of \RuleFont{Weak}, then, $(M' + c) \vdash e_1\ r\ e_2$.
  \end{itemize}
  We drop the remaining cases due to space restrictions.
\end{proof}

We next define our notion of forward bisimulation and prove that analogy is a bisimulation.

\begin{definition} \label{Defn:For.Bisimilarity}
  Call a binary relation $\mathpzc{R}$ on $\mathcal{M}_W$ a bisimulation for $\ForwardTS$ when for every microcosms $M_1$ and $M_2$ of $W$ such that $M_1\ \mathpzc{R}\ M_2$, the following hold:
  \begin{itemize}
    \item $M_1 \SingleC M'_1 \Rightarrow \exists M'_2 \vartriangleleft W.\ (M_2 \SingleC M'_2) \wedge (M'_1\ \mathpzc{R}\ M'_2)$, and
    \item $M_2 \SingleC M'_2 \Rightarrow \exists M'_1 \vartriangleleft W.\ (M_1 \SingleC M'_1) \wedge (M'_1\ \mathpzc{R}\ M'_2)$.
  \end{itemize}
  Write \ForBiSim\ for the bisimilarity of $\ForwardTS$, i.e., the largest bisimulation for $\ForwardTS$.
\end{definition}

\begin{theorem} \label{Theo:Analog.Bisimu}
  For every $W$, the relation $\approx$ is a bisimulation for $\ForwardTS$.
\end{theorem}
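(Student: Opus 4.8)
The plan is to show directly that $\approx$ satisfies the two clauses of Definition~\ref{Defn:For.Bisimilarity}. Since $\approx$ is symmetric, it suffices to establish the first clause: given $M \approx M'$ and $M \SingleC M_1$, we must produce $M'_1 \vartriangleleft W$ with $M' \SingleC M'_1$ and $M_1 \approx M'_1$. By Definition~\ref{Defn:For.Trans.Sys}, $M \SingleC M_1$ means $M_1 = (M + c)$ for some $c \in \mathcal{C}^*_W$ with $M \not\InitSays e'_1\ \Wildcard\ e'_2$ where $c = e'_1\ r\ e'_2$ (so that $(M + c)$ is defined and $(\ref{Misc:M4})$ is preserved). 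The natural candidate is $M'_1 = (M' + c)$. First I would check that this is legitimate: since $M \approx M'$ and initial judgements feed into $\vdash$ via \RuleFont{In-OK}, together with Lemma~\ref{Lemm:Online.Impl.Init}, one gets $M \InitSays e'_1\ \hat r\ e'_2 \iff M' \InitSays e'_1\ \hat r\ e'_2$ for accurate or $<>$ relations, so $M' \not\InitSays e'_1\ \Wildcard\ e'_2$ and $(M' + c)$ is a well-defined microcosm; the $(\ref{Misc:M4})$ side-condition is a structural constraint on the events/order, which coincide for $M$ and $M'$ on the relevant chains by the same analogy argument, so it transfers. Hence $M' \SingleC (M' + c)$.

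The core of the proof is then $M_1 \approx M'_1$, i.e.\ $(M + c) \vdash e_1\ r\ e_2 \iff (M' + c) \vdash e_1\ r\ e_2$ for all $e_1, e_2$. By symmetry of $\approx$ it is enough to prove the left-to-right implication. Given a derivation $\Pi = (M + c) \vdash e_1\ r\ e_2$, I would split on whether $c \in \Pi$ or $c \notin \Pi$. If $c \notin \Pi$, then Lemma~\ref{Lemm:Analog.No.Use} gives $(M' + c) \vdash e_1\ r\ e_2$ directly. If $c \in \Pi$, then Lemma~\ref{Lemm:Analog.Use} gives exactly the same conclusion. So both halves of the case split are already packaged as lemmas, and the proof of Theorem~\ref{Theo:Analog.Bisimu} is essentially their combination.

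The one genuine gap to close carefully is the well-definedness of $(M' + c)$ — specifically that the $(\ref{Misc:M4})$ side-condition holds for $M'$ whenever it holds for $M$. I expect this to be the main obstacle, because $(\ref{Misc:M4})$ talks about chains of internal $<$-edges, and analogy $\approx$ is phrased purely in terms of the derived judgement $\vdash$, not in terms of which correspondences are internal versus external or which events lie in $I$. The resolution I would use: $M \approx M'$ implies $M \vdash e\ r\ e' \iff M' \vdash e\ r\ e'$ for all pairs, and since every initial $<$-edge is derivable via \RuleFont{In-OK} and, conversely, a $\vdash$-derived $<$ between a pair with no initial judgement must come from transitivity of initial $<$-edges (Lemma~\ref{Lemm:Online.Impl.Init}), the two microcosms have the same $<$-reachability relation on their event sets; combined with the fact that $(M + c)$ being defined forces $c \in \mathcal{C}^*_W$ and hence the relevant events already lie in $W$, the chain condition for $(M' + c)$ follows from that for $(M + c)$. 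I would state this as a short preliminary observation (or fold it into the invocation of Lemma~\ref{Lemm:Analog.No.Use}/\ref{Lemm:Analog.Use}, which already implicitly need the analog microcosm to be well-formed), and then conclude $M_1 \approx M'_1$, completing both clauses and hence the theorem.
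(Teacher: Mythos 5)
Your proposal is correct and takes essentially the same route as the paper: the paper's proof is exactly the case split on $c \in \Pi$ versus $c \notin \Pi$, discharging the two cases with Lemma~\ref{Lemm:Analog.No.Use} and Lemma~\ref{Lemm:Analog.Use} respectively and concluding by symmetry of $\approx$. The well-definedness of $(M' + c)$ (in particular the transfer of the (\ref{Misc:M4}) side-condition), which you rightly single out as the delicate point, is left entirely implicit in the paper's proof, so your extra care there goes beyond rather than against the published argument.
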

\begin{proof}
  Let $M, M' \vartriangleleft W$ and $M \approx M'$.
  Suppose that $M \SingleC (M + c)$ and $\Pi = (M + c) \vdash e_1\ r\ e_2$.
  When $c \notin \Pi$, by Lemma~\ref{Lemm:Analog.No.Use}, $(M + c) \vdash e_1\ r\ e_2$.
  When $c \in \Pi$, by Lemma~\ref{Lemm:Analog.Use}, $(M + c) \vdash e_1\ r\ e_2$.
  The result follows by symmetry.
\end{proof}

Now that we are armed with Theorem~\ref{Theo:Analog.Bisimu}, it is easy to prove Theorem~\ref{Theo:For.Order.Irrelevant}.
We would like to draw the reader's attention to the small length of the proof and the simple technique used for it.
Such a comfort is a consequence of bisimulation being such a strong concept.

For a given $n$, write $\overline{c}$ for $c_1, c_2, \dots, c_n$ and $n = |\overline{c}|$.
Extend $\stackrel{.}{\rightarrow}$, accordingly, to $\stackrel{.}{\twoheadrightarrow}$ where \MultiCs\ abbreviates $\SingleCNamed{c_1} \circ \SingleCNamed{c_2} \circ \dots \circ \SingleCNamed{c_n}$.
Furthermore, write $\overline{c}' = p(\overline{c})$ when $\overline{c}'$ is a permutation of $\overline{c}$.

\begin{theorem} \label{Theo:For.Order.Irrelevant}
  Suppose that $M_0 \approx M'_0$.
  Suppose also that $M_0 \MultiCs M$ and $M'_0 \MultiCsPed M'$, where $\overline{c}' = p(\overline{c})$.
  Then, $M \approx M'$.
\end{theorem}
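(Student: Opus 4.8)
The plan is to reduce Theorem~\ref{Theo:For.Order.Irrelevant} to Theorem~\ref{Theo:Analog.Bisimu} by a bare induction on $n = |\overline{c}|$, exploiting that $\approx$ is a bisimulation for $\ForwardTS$ and that every permutation can be reached by a sequence of adjacent transpositions. The one subtlety to handle first is that a permutation $\overline{c}' = p(\overline{c})$ need not process the \emph{same first correspondence} as $\overline{c}$, so a naive ``strip the head and recurse'' will not line up the two runs; I would route around this either (a) by first proving a single-step commutation lemma --- if $M \SingleCNamed{c} \circ \SingleCNamed{c'} N$ then $M \SingleCNamed{c'} \circ \SingleCNamed{c} N$, which is immediate since $(M + c) + c' = (M + c') + c$ holds structurally (both are $M$ with the two extra correspondences, and (\ref{Misc:M4}) is assumed preserved throughout) --- and then using the standard fact that adjacent transpositions generate $S_n$; or (b) by proving the statement directly for \emph{arbitrary} runs over a multiset of correspondences, keeping the induction on the shared multiset rather than on a fixed ordering.

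\textbf{Route (a), in steps.} First, observe $\approx$ is transitive and reflexive (immediate from Definition~\ref{Defn:Micro.Analog}), so it suffices to (i) show any two runs of $\ForwardTS$ over the \emph{same} sequence $\overline{c}$ from $\approx$-related starts end $\approx$-related, and (ii) show any two runs over sequences differing by one adjacent transposition, from the same microcosm, end at the \emph{same} microcosm. For (i): induct on $n$; the base case $n = 0$ is the hypothesis $M_0 \approx M'_0$; for the step, write $M_0 \SingleCNamed{c_1} M_1 \MultiCsNamed{\overline{d}} M$ and $M'_0 \SingleCNamed{c_1} M'_1 \MultiCsNamed{\overline{d}} M'$ with $\overline{d} = c_2, \dots, c_n$; by Theorem~\ref{Theo:Analog.Bisimu}, $M_0 \approx M'_0$ and $M_0 \SingleCNamed{c_1} M_1$ give $M'_0 \SingleCNamed{c_1} M''_1$ with $M_1 \approx M''_1$, and since $(M_0 + c_1)$ is well-defined and unique, $M''_1 = M'_1$, so $M_1 \approx M'_1$; apply the inductive hypothesis to the length-$(n-1)$ tails. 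For (ii): the commutation identity $(M + c_i) + c_{i+1} = (M + c_{i+1}) + c_i$ shows the two runs pass through the same microcosm after step $i+1$ and hence coincide thereafter, so they reach the same endpoint and are in particular $\approx$-related. Finally, since $\overline{c}' = p(\overline{c})$ and adjacent transpositions generate all permutations, chain finitely many instances of (ii) together with one instance of (i) (and transitivity of $\approx$) to conclude $M \approx M'$.

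\textbf{The main obstacle} is the alignment issue just described: making precise that ``permutation'' may be decomposed into adjacent swaps and that each swap is a genuine identity on microcosms, not merely a structural coincidence --- this hinges on $(M + c) + c'$ and $(M + c') + c$ literally denoting the same microcosm (same internal events, same external correspondences, with the internal/external tag of each added correspondence fixed independently of order), together with the standing assumption that (\ref{Misc:M4}) is maintained along every run so that all intermediate microcosms are legitimate. Once that identity is in hand, the rest is routine bookkeeping: the heavy lifting (that a single common step preserves $\approx$) has already been done in Theorem~\ref{Theo:Analog.Bisimu}, which is exactly why, as the surrounding text promises, the proof is a short induction on length.
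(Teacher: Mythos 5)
Your high-level reduction --- everything should follow from Theorem~\ref{Theo:Analog.Bisimu} plus bookkeeping on orderings --- is exactly the paper's, and your step (i) coincides with how the paper pushes analogy across the portion of the two runs on which they agree. But the paper's decomposition is different and, importantly, avoids the one step of yours that does not go through as stated. The paper proceeds by strong induction on $n$, locates the \emph{first} position $k_0$ at which $\overline{c}$ and $\overline{c}'$ disagree, carries the analogy across the common prefix by repeated use of Theorem~\ref{Theo:Analog.Bisimu}, and then applies the inductive hypothesis to the two remaining suffix runs. Those suffixes are sub-runs of the two runs you are \emph{given}, so every intermediate microcosm occurring in them is already known to be legitimate. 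Your route (a), by contrast, manufactures new runs: the decomposition of $p$ into adjacent transpositions forces you to traverse intermediate permutations of $\overline{c}$ that appear nowhere in the hypotheses.

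That is where the gap is. The operation $(M + e_1\ r\ e_2)$ is \emph{partial}: it is defined only when $M \not\InitSays e_1\ \Wildcard\ e_2$, and only under the standing assumption that (\ref{Misc:M4}) survives the addition. Consequently $(M + c) + c' = (M + c') + c$ is not a free structural identity; it presupposes that $(M + c')$ is itself a well-formed intermediate state. That does not follow from the well-formedness of $(M + c) + c'$: adding $c'$ first can make the endpoints of $c$ initially related via \RuleFont{In-Tr} (for instance when $c'$ completes a $<$-chain between them), so that the subsequent addition of $c$ is disallowed even though adding $c$ before $c'$ is fine. The hypotheses certify well-definedness only along the two given orders $\overline{c}$ and $\overline{c}'$, and say nothing about the permutations strictly between them in your chain of transpositions. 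To rescue route (a) you would need a separate lemma that legitimacy of the two endpoint runs implies legitimacy of every intermediate permutation (plausible, via monotonicity of $\InitSays$ and consistency of $\vdash$, but it is a real proof obligation, not an observation). Your route (b) --- inducting on the shared multiset rather than on a fixed ordering --- is in fact much closer to what the paper actually does and sidesteps the issue entirely.
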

\begin{proof}
  We proceed by strong induction on $n$, where $n = |\overline{c}|$:
  \begin{itemize}
    \item $n = 1$.\quad
      By Theorem~\ref{Theo:Analog.Bisimu}.
    
    \item $n = k$.\quad
      Suppose that the theorem is correct for every $n < k$.
      The case when $\overline{c} = \overline{c}'$ is immediate.
      Otherwise, let $k_0$ be the first position where $\overline{c}$ and $\overline{c}'$ disagree.
      That is, $M_0 \MultiCsNamed{\overline{c}_l} M_{k_0 - 1} \SingleCNamed{c_{k_0}} M_{k_0} \MultiCsNamed{\overline{c}_r} M$ and $M'_0 \MultiCsNamed{\overline{c}'_l} M'_{k_0 - 1} \SingleCNamed{c'_{k_0}} M'_{k_0} \MultiCsNamed{\overline{c}'_r} M'$ such that $\overline{c}_l = \overline{c}'_l$, $c_{k_0} \neq c'_{k_0}$, and $\overline{c}'_r = p(\overline{c}_r)$.
      Then, $M_{k_0} \approx M'_{k_0}$ is immediate from Theorem~\ref{Theo:Analog.Bisimu}.
      And, given that $|c_{k_0}\overline{c}_r| = |c'_{k_0}\overline{c}'_r| < k$, by the inductive hypothesis, $M \approx M'$.
  \end{itemize}
  The result follows.
\end{proof}

\begin{theorem} \label{Theo:For.Bisimilarity}
  For $W$, the relation $\approx$ is the bisimilarity of $\ForwardTS$, i.e., $\approx = \sim_F$.
\end{theorem}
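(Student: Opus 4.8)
The plan is to prove the two inclusions $\approx\ \subseteq\ \sim_F$ and $\sim_F\ \subseteq\ \approx$ separately. The first is immediate: Theorem~\ref{Theo:Analog.Bisimu} already tells us that $\approx$ is a bisimulation for $\ForwardTS$, and $\sim_F$ is by Definition~\ref{Defn:For.Bisimilarity} the \emph{largest} such bisimulation, so $\approx\ \subseteq\ \sim_F$ holds with no further work.

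For the reverse inclusion, the natural route is to show directly that any two forward-bisimilar microcosms are analogous, i.e.\ that $M\ \sim_F\ M'$ implies $M \approx M'$. First I would unfold Definition~\ref{Defn:Micro.Analog}: I must show that for all $e_1, e_2$ and all $r$, $M \vdash e_1\ r\ e_2 \Leftrightarrow M' \vdash e_1\ r\ e_2$; by symmetry of $\sim_F$ it suffices to prove one direction. The idea is to feed the \emph{same} correspondence to both microcosms and use the fact that bisimulation forces the two resulting states to stay related, then read off agreement from the transition labels. Concretely, given $M \vdash e_1\ r\ e_2$ with $r \neq\ ?$, pick a correspondence $c \in \mathcal{C}^*_W$ with $c \neq e_1\ r\ e_2$ (possible since $|\mathcal{C}^*_W| = \aleph_0$ while a microcosm is finite); then $M \SingleC (M+c)$, so by bisimulation $M' \SingleC M'_2$ for some $M'_2$ with $(M+c)\ \sim_F\ M'_2$, and the only transition available from $M'$ with label $c$ sends it to $M'+c$. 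Since $\RuleFont{Weak}$ preserves $M \vdash e_1\ r\ e_2$ across the addition of $c$, the relation $(M+c)\ \sim_F\ (M'+c)$ must already have ``seen'' this correspondence; one then argues that if $M' \not\vdash e_1\ \Wildcard\ e_2$ or $M'$ disagreed, repeatedly matching transitions would eventually expose a mismatch a bisimulation cannot tolerate. For the case $r\ =\ ?$, the argument is dual, using $\RuleFont{Strng}$: supplying the accurate correspondence $W \vDash^* e_1\ r_W\ e_2$ turns the $?$ into $r_W$ on the $M$ side, and bisimulation forces the same on the $M'$ side, so $M'$ cannot have had anything other than $?$ there either (otherwise consistency, Theorem~\ref{Theo:Micro.Rules.Const}, would be violated after the shared step).

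The main obstacle I expect is making the ``bisimulation forces agreement'' step genuinely rigorous rather than hand-wavy: a single matched transition only tells us the \emph{successor} states are related, not the originals, so one needs either a clean inductive invariant on derivation size, or an argument that exploits the ability to run arbitrarily long matched sequences $M \MultiCs \cdot$ and $M' \MultiCs \cdot$ (via Theorem~\ref{Theo:For.Order.Irrelevant}) to ``pump'' any discrepancy into an observable one. A cleaner alternative, which I would try first, is to show that the relation $\{(M,M') : \exists c.\ M \SingleC (M+c),\ M' \SingleC (M'+c),\ (M+c) \approx (M'+c)\}$ collapses onto $\approx$, or more simply to verify that $\sim_F$ itself, being a bisimulation, must relate only microcosms whose \emph{initial} judgements already coincide on all $r \in R \cup \{<>\}$ (by probing with $\RuleFont{Weak}$/$\RuleFont{Strng}$ moves), and then invoke Lemma~\ref{Lemm:Online.Impl.Init} together with the structure of Fig.~\ref{Fig:Rules} to lift agreement on initial judgements to agreement on all $\vdash$-judgements, including the $?$ cases through $\RuleFont{Un-3}$ and $\RuleFont{Un-4}$. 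Either way, the crux is bridging from ``related under $\sim_F$'' to ``identical $\vdash$-theories,'' and I would budget most of the proof's effort there.
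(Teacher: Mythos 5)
Your decomposition is exactly the paper's: $\approx\ \subseteq\ \ForBiSim$ is immediate from Theorem~\ref{Theo:Analog.Bisimu} plus the maximality of $\ForBiSim$, and all the work is in $\ForBiSim\ \subseteq\ \approx$. For that direction, however, your sketch stops where the proof has to start. Adding an unrelated $c$ to both sides and invoking \RuleFont{Weak} only yields $(M+c)\ \ForBiSim\ (M'+c)$ together with the fact that $M+c$ still derives $e_1\ r\ e_2$ --- the same problem one level down; ``repeatedly matching transitions would eventually expose a mismatch'' is precisely the claim needing proof, and nothing in the sketch says what the exposed mismatch would be. The missing idea is that the observable through which $\ForBiSim$ constrains judgements is the \emph{enabledness} of transitions: $(M + e_1\ r\ e_2)$ is defined only when $M \not\InitSays e_1\ \Wildcard\ e_2$, so the set of labels a microcosm can fire reveals the domain of its initial judgements, and by Lemma~\ref{Lemm:Online.Impl.Init} every non-$?$ online judgement is an initial one.

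Concretely, Corollary~\ref{Corr:Mu.Hats} first confines any disagreement to two shapes. If $M \vdash e_1\ ?\ e_2$ while $M' \vdash e_1\ r'\ e_2$ with $r' \neq\ ?$, then $M$ can fire a transition labelled $e_1\ r_W\ e_2$ (where $W \vDash^* e_1\ r_W\ e_2$), but $M'$, having $M' \InitSays e_1\ r'\ e_2$, has \emph{no} transition with any label on that pair --- the contradiction is an unmatched transition, not (as you write) a consistency violation ``after the shared step,'' since no shared step exists. If $M \vdash e_1 < e_2$ while $M' \vdash e_1 <> e_2$, a two-step probe separates them: pick a fresh $e_3$ with $W \vDash e_2 < e_3$ and add $e_2 < e_3$ to both; \RuleFont{In-Tr} then gives $(M + e_2 < e_3) \InitSays e_1 < e_3$, disabling every transition labelled $e_1\ \Wildcard\ e_3$ on that side, whereas $<>$ does not compose with $<$ under \RuleFont{In-Tr}, so the $M'$-side keeps that transition enabled. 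Your fallback (``probe with \RuleFont{Weak}/\RuleFont{Strng} moves to show the initial judgements coincide'') points in the right direction but never identifies either the enabledness observable or the transitivity asymmetry that actually discriminates $<$ from $<>$, so the crux of the theorem remains unproved.
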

\begin{proof}
  Given that $\ForBiSim$ is the largest bisimulation of $\ForwardTS$ (c.f., Definition~\ref{Defn:For.Bisimilarity}), it suffices that we show $\ForBiSim\ \subseteq\ \approx$.
  To that end, suppose that $M \ForBiSim M'$; we will show that $M \approx M'$.
  Choose an arbitrary pair of events $e_1$ and $e_2$ such that $\Pi = M \vdash e_1\ r\ e_2$ and $\Pi' = M \vdash e_1\ r'\ e_2$.
  The proof is by parallel induction on $\Pi$ and $\Pi'$ and proceeds by case distinction on $\LastRuleOf{\Pi}$ and $\LastRuleOf{\Pi'}$.
  The goal is to show that, in all the possible cases, $r = r'$.
  We only show one case here:
  \begin{itemize}
    \item $M \vdash e_1\ r\ e_2$ and $M' \vdash e_1\ r'\ e_2$ but $\{r, r'\} \cap \{?\} = \varnothing$.\quad
      Note first that, in this case, by Corollary~\ref{Corr:Mu.Hats}, regardless of $M$ and $M'$ being bisimilar, either $r \sqsubseteq r')$ or $r' \sqsubseteq r$.
      We now show that, in the case of bisimilarity, $r = r'$.
      Let us assume that $r \sqsubseteq r'$; let us also assume that $r\ =\ <$ and $r'\ =\ <>$; the proof is similar otherwise.

      When $M \vdash e_1 < e_2$ but $M' \vdash e_1\ <>\ e_2$, by Lemma~\ref{Lemm:Online.Impl.Init}, $M \InitSays e_1 < e_2$ and $M'\InitSays e_1\ <>\ e_2$, respectively.
      Hence, for an event $e_3$ such that $W \vDash e_2 < e_3$, it follows using an application of \RuleFont{In-Tr} that $M \InitSays e_1 < e_3$.
      Using an application of \RuleFont{In-OK}, then, $M \vdash e_1 < e_3$.
      This is whilst, with the given information, $M' \InitSays e_1\ \Wildcard\ e_3$ is not derivable.
      Thus, $(M' + e_1 < e_3)$ is defined, but, $(M' + e_1 < e_3)$ is not.
      Let $c = e_2 < e_3$.
      Then, $M' \SingleC (M' + c)$ but $M \SingleC \Wildcard$ is not implied, which contradicts $M \ForBiSim M'$.
      (See Definition~\ref{Defn:For.Bisimilarity}.)
  \end{itemize}
  We omit the remaining cases due to space restrictions.
\end{proof}

A shortcoming of $\ForBiSim$ is that it only studies microcosm evolution via addition.
Whereas microcosms can well evolve via update too.
We leave the study of $\ForBiSim$ in presence of updates (as well as additions) to future work.
The same applies to $\BackBiSim$, which we will consider next.

\section{Backward Bisimilarity}

Only limited resources are available to devices, especially the edge devices.
Emptying the disk or memory of such a device is routine then.
To that end, usually, one removes the outdated data to come to a new manageable state.
This section deals with when (causal) information is to be removed from devices, say due to resource limitation or outdatedness.
That too can be seen as an evolution for a microcosm, albeit \textit{backward} (Definition~\ref{Defn:Back.Trans.Sys}).
We show that microcosm analogy (Definition~\ref{Defn:Micro.Analog}) gives rise to a bisimilarity for backward evolution as well (Theorem~\ref{Theo:Back.Bisimilarity}).
Besides, this section presents the backward counterpart of Theorem~\ref{Theo:For.Order.Irrelevant} that proves:
The order of removal of causal information from bisimilar devices does not matter in that they will again be bisimilar once they are both done with the set of correspondences (Theorem~\ref{Theo:Back.Order.Irrelevant}).

\begin{definition} \label{Defn:Back.Trans.Sys}
  Define \SingleBC\ for the transition system $\BackwardTS = (\mathcal{M}_W, \mathcal{C}^*_W, \stackrel{.}{\leftarrow})$ such that $M \SingleBC M'$ when $M = (M' + c)$ for some $c \in \mathcal{C}^*_W$.
  Call \BackwardTS\ the backward transition system of $W$.
\end{definition}

The notation $M \SingleBC\ M'$ is indeed intended to be read from left to right to denote getting from $M$ to $M'$ by the removal of $c$.

\begin{definition} \label{Defn:Back.Bisimilarity}
  Call a binary relation $\mathpzc{R}$ on $\mathcal{M}_W$ a bisimulation for $\BackwardTS$ when for every microcosms $M_1$ and $M_2$ of $W$ such that $M_1\ \mathpzc{R}\ M_2$, the following hold:
  \begin{itemize}
    \item $M_1 \SingleBC M'_1 \Rightarrow \exists M'_2 \vartriangleleft W.\ (M_2 \SingleBC M'_2) \wedge (M'_1\ \mathpzc{R}\ M'_2)$, and
    \item $M_2 \SingleBC M'_2 \Rightarrow \exists M'_1 \vartriangleleft W.\ (M_1 \SingleBC M'_1) \wedge (M'_1\ \mathpzc{R}\ M'_2)$.
  \end{itemize}
  Write \BackBiSim\ for the bisimilarity of $\BackwardTS$, i.e., the largest bisimulation for $\BackwardTS$.
\end{definition}

\begin{theorem} \label{Theo:Back.Bisimu}
  For every $W$, the relation $\approx$ is a bisimulation for $\BackwardTS$.
\end{theorem}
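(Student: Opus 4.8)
The plan is to transport the argument for Theorem~\ref{Theo:Analog.Bisimu} to the backward transition system. Since $\approx$ is symmetric, the two clauses of Definition~\ref{Defn:Back.Bisimilarity} coincide for $\mathpzc{R} = {\approx}$, so it suffices to show: if $M \approx M'$ and $M \SingleBC M_1$ --- that is, $M = M_1 + c$ for some $c \in \mathcal{C}^*_W$ --- then there is $M'_1 \vartriangleleft W$ with $M' \SingleBC M'_1$ and $M_1 \approx M'_1$. First I would pin down the matching move. Writing $c = e_1\ r_c\ e_2$ with $r_c \in R \cup \{{<>}\}$, from $M \InitSays c$ (by \RuleFont{Init}) and \RuleFont{In-OK} we get $M \vdash c$, hence $M' \vdash c$ by $M \approx M'$, hence $M' \InitSays c$ by Lemma~\ref{Lemm:Online.Impl.Init}. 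Now an accurate $<$ can only be derived from internal events, and a $\parallel$- or $<>$-judgement for a pair can only come from a correspondence physically stored for that very pair (possibly in the reverse orientation, to be undone by \RuleFont{Co-Sym}/\RuleFont{CR-Sym}); so $c$, up to symmetry, occurs in the stored data of $M'$, and deleting it gives the candidate $M'_1$. A short check --- transferring $\not\InitSays$-facts through $\approx$ exactly as above, and observing that discarding data cannot create a new ${<}$-chain through $M'_1$ --- shows $M'_1 \not\InitSays e_1\ \Wildcard\ e_2$ and that \ref{Misc:M4} still holds, so $M' = M'_1 + c$ and $M' \SingleBC M'_1$.

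It then remains to prove $M_1 \approx M'_1$, for which I would establish a backward analogue of Lemmata~\ref{Lemm:Analog.No.Use} and \ref{Lemm:Analog.Use}. Fix events $a, b$ and suppose $M_1 \vdash a\ r\ b$; we want $M'_1 \vdash a\ r\ b$ (the converse being symmetric). Re-adding $c$: if $r \neq\ ?$ then \RuleFont{Weak} yields $M \vdash a\ r\ b$, while if $r\ =\ ?$ then either the $?$ persists into $M$ or it is strengthened there by \RuleFont{Strng}/\RuleFont{Up-S}; in every case $\approx$ moves the resulting judgement of $M$ to $M'$, and Lemma~\ref{Lemm:No.Use.Shrink} (applied with $M' = M'_1 + c$) pushes it back down to $M'_1$ as soon as the derivation on the $M'$-side does not mention $c$. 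When it does mention $c$, I would argue by induction on the derivation, mirroring the proof of Lemma~\ref{Lemm:Analog.Use}: the only ways $c$ can enter are via \RuleFont{In-OK}/\RuleFont{In-Tr} (handled by a backward version of Lemma~\ref{Lemm:Analog.Init.Use}) and via the strengthening/update rules (where one re-derives the weaker judgement already present in $M_1$, hence --- by the inductive hypothesis --- in $M'_1$), using consistency of $\vdash$ (Theorem~\ref{Theo:Micro.Rules.Const}) to rule out any clash. This gives $M_1 \approx M'_1$ and hence the theorem.

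I expect the main obstacle to be the interaction between removal and the rules \RuleFont{Strng}, \RuleFont{Up-S} (together with \RuleFont{Weak}, \RuleFont{Up-W}). Forward evolution is monotone and the forward lemmas exploit this, but deleting $c$ can invalidate a derivation that was routed through it, and since $\approx$ speaks only of the derivable relation $\vdash$ --- never of which correspondences are physically stored --- care is needed both to justify that the matching removal on the $M'$-side is well-defined and to show that a judgement surviving in $M_1$ also survives in $M'_1$, rather than being strengthened or lost on one side only. Once these bookkeeping points are dispatched --- in essence, the observation that $\approx$ forces $M$ and $M'$ to carry the same stored correspondences up to symmetry, combined with Lemma~\ref{Lemm:No.Use.Shrink} --- the induction has the same shape as in the forward case and goes through.
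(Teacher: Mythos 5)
The paper states this theorem without proof (it is the only one of the three backward results whose argument is entirely omitted), so there is nothing to compare against line by line; your plan of transporting the forward machinery (Lemmata~\ref{Lemm:Analog.Init.Use}--\ref{Lemm:Analog.Use}) is certainly the intended route. However, two steps in your sketch have genuine gaps. First, the claim that the matching move $M' \SingleBC M'_1$ exists rests on the assertion that a $\vdash$-judgement for a pair can only come from a correspondence \emph{physically stored} for that pair. That is true for $\parallel$ and $<>$ (whose only initial rules are \RuleFont{Init} and the symmetry rules), but false for $<$: a judgement $M' \InitSays e_1 < e_2$ may arise from \RuleFont{In-Tr} through an intermediate event, or from two internal events, with no stored correspondence $e_1 < e_2$ to delete. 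So when $c = e_1 < e_2$ you have not shown that $M'$ decomposes as $M'_1 + c$ at all; rescuing this requires a separate argument (e.g.\ that (\ref{Misc:M4}) rules out $M$ storing $e_1 < e_2$ externally while $M'$ derives it transitively, and a convention for what ``removing'' an internal correspondence even means), none of which is in your sketch.

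Second, the case $r\ =\ ?$ of the transfer $M_1 \vdash a\ r\ b \Rightarrow M'_1 \vdash a\ r\ b$ is not actually dispatched by the lemmas you cite. If re-adding $c$ strengthens the pair, you land at $M \vdash a\ r'\ b$ with $r' \neq\ ?$, hence $M' \vdash a\ r'\ b$; but Lemma~\ref{Lemm:No.Use.Shrink} only pushes a judgement down to $M'_1$ when its derivation avoids $c$, which by construction it does not here. There is no ``inverse'' of \RuleFont{Strng}/\RuleFont{Up-S}, so to conclude $M'_1 \vdash a\ ?\ b$ you must exhibit a fresh derivation via one of \RuleFont{Un-1}--\RuleFont{Un-4} in $M'_1$ (and symmetrically rule out that the pair is \emph{lost}, since Theorem~\ref{Theo:Rules.Comp} explicitly allows that no relation is derivable for a pair). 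Your appeal to ``the weaker judgement already present in $M_1$, hence in $M'_1$ by the inductive hypothesis'' is circular at exactly this point, because $M'_1 \vdash a\ ?\ b$ is the statement being proved. Until these two points are filled in, the backward induction does not have ``the same shape as the forward case'': forward evolution is monotone for non-$?$ judgements (\RuleFont{Weak}), whereas removal is not, and that asymmetry is precisely where the proof still owes an argument.
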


We extend $\stackrel{.}{\leftarrow}$, like $\stackrel{.}{\twoheadrightarrow}$ to $\stackrel{.}{\twoheadleftarrow}$ where \MultiBCs\ abbreviates $\SingleBCNamed{c_1} \circ \SingleBCNamed{c_2} \circ \dots \circ \SingleBCNamed{c_n}$.
In words, the following theorem states that the order of removal is irrelevant so long as the same set of correspondences are removed from analogous microcosms.

\begin{theorem} \label{Theo:Back.Order.Irrelevant}
  Suppose that $M_0 \approx M'_0$.
  Suppose also that $M_0 \MultiBCs M$ and $M'_0 \MultiBCsPed M'$, where $\overline{c}' = p(\overline{c})$.
  Then, $M \approx M'$.
\end{theorem}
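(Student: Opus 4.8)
The plan is to mirror the proof of Theorem~\ref{Theo:For.Order.Irrelevant} almost verbatim, replacing the forward transition system $\ForwardTS$ with the backward one $\BackwardTS$ and the one-step bisimulation result Theorem~\ref{Theo:Analog.Bisimu} with its backward analogue Theorem~\ref{Theo:Back.Bisimu}. The only genuinely new ingredient is Theorem~\ref{Theo:Back.Bisimu}, which I assume as given; once that is in hand, everything is a routine induction on the length $n = |\overline{c}|$.

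First I would set up notation: for $n = 1$ the claim is exactly that $\approx$ is a bisimulation for $\BackwardTS$, i.e.\ Theorem~\ref{Theo:Back.Bisimu}. For the inductive step, suppose the statement holds for all lengths strictly less than $k$ and take $|\overline{c}| = k$. If $\overline{c} = \overline{c}'$ the result is immediate by repeatedly invoking Theorem~\ref{Theo:Back.Bisimu} one step at a time. Otherwise let $k_0$ be the first position at which the sequences $\overline{c}$ and $\overline{c}'$ disagree, so that we can factor
$$
  M_0 \MultiBCsNamed{\overline{c}_l} M_{k_0-1} \SingleBCNamed{c_{k_0}} M_{k_0} \MultiBCsNamed{\overline{c}_r} M,
  \qquad
  M'_0 \MultiBCsNamed{\overline{c}'_l} M'_{k_0-1} \SingleBCNamed{c'_{k_0}} M'_{k_0} \MultiBCsNamed{\overline{c}'_r} M',
$$
with $\overline{c}_l = \overline{c}'_l$, $c_{k_0} \neq c'_{k_0}$, and $\overline{c}'_r$ a permutation of $\overline{c}_r$. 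Applying Theorem~\ref{Theo:Back.Bisimu} (or the inductive hypothesis) along the common prefix $\overline{c}_l = \overline{c}'_l$ gives $M_{k_0-1} \approx M'_{k_0-1}$. Then, since $M_{k_0-1} \SingleBCNamed{c_{k_0}} M_{k_0}$ and the sequences now remove the same \emph{multiset} of correspondences from $M_{k_0-1}$ and $M'_{k_0-1}$ — namely $c_{k_0}\overline{c}_r$ on one side and $c'_{k_0}\overline{c}'_r$ on the other, which are permutations of each other — and $|c_{k_0}\overline{c}_r| < k$, the inductive hypothesis applied to $M_{k_0-1} \approx M'_{k_0-1}$ yields $M \approx M'$.

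The point I would flag as the subtle one is the use of Theorem~\ref{Theo:Back.Bisimu} to conclude $M_{k_0} \approx M'_{k_0-1}$-type facts: one must be slightly careful that removals of \emph{distinct} correspondences from analogous microcosms again land in analogous microcosms, and more importantly that a removal step is always \emph{available} (i.e.\ the backward transition is total enough) so that the bisimulation game can be played — this is precisely what Theorem~\ref{Theo:Back.Bisimu} packages, so modulo that theorem the argument is purely combinatorial bookkeeping on permutations. I expect no real obstacle beyond stating the factorisation at position $k_0$ cleanly; the heavy lifting lives in Theorem~\ref{Theo:Back.Bisimu}, exactly as in the forward case the heavy lifting lived in Theorem~\ref{Theo:Analog.Bisimu}.
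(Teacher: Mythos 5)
Your proposal is correct and takes essentially the same approach as the paper: the paper's own proof of Theorem~\ref{Theo:Back.Order.Irrelevant} is literally the single line ``Similar to Theorem~\ref{Theo:For.Order.Irrelevant}'', and what you have written out is exactly that mirroring, with Theorem~\ref{Theo:Back.Bisimu} replacing Theorem~\ref{Theo:Analog.Bisimu} inside the strong induction on $n = |\overline{c}|$ and the split at the first position of disagreement. Nothing further is needed.
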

\begin{proof}
  Similar to Theorem~\ref{Theo:For.Order.Irrelevant}.
\end{proof}

\begin{theorem} \label{Theo:Back.Bisimilarity}
  For $W$, the relation $\approx$ is the bisimilarity of $\BackwardTS$, i.e., $\approx = \sim_B$.
\end{theorem}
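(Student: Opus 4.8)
The plan is to mirror the proof of Theorem~\ref{Theo:For.Bisimilarity} exactly, transporting each ingredient from the forward setting to the backward one. Since $\BackBiSim$ is by definition the largest bisimulation for $\BackwardTS$ and, by Theorem~\ref{Theo:Back.Bisimu}, $\approx$ is itself a bisimulation for $\BackwardTS$, we automatically get $\approx\ \subseteq\ \BackBiSim$. So the whole content of the theorem is the reverse inclusion $\BackBiSim\ \subseteq\ \approx$: assume $M \BackBiSim M'$ and show that $M$ and $M'$ agree on the correspondence between every pair of events $e_1, e_2$.

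First I would fix an arbitrary pair $e_1, e_2$ and suppose $\Pi = M \vdash e_1\ r\ e_2$ and $\Pi' = M' \vdash e_1\ r'\ e_2$, then argue $r = r'$ by parallel induction on $\Pi$ and $\Pi'$, case-splitting on $\LastRuleOf{\Pi}$ and $\LastRuleOf{\Pi'}$. The only genuinely new case, as in the forward proof, is when $\{r, r'\} \cap \{?\} = \varnothing$ and the two relations differ in accuracy: by Corollary~\ref{Corr:Mu.Hats} we may assume without loss of generality $r \sqsubset r'$, and by symmetry of the argument it is enough to treat $r\ =\ <$, $r'\ =\ <>$. By Lemma~\ref{Lemm:Online.Impl.Init}, $M \InitSays e_1 < e_2$ and $M' \InitSays e_1\ <>\ e_2$. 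The forward proof then produced a witness event $e_3$ with $W \vDash e_2 < e_3$ so that $M$ could add $c = e_2 < e_3$ and derive $e_1 < e_3$ by \RuleFont{In-Tr}/\RuleFont{In-OK} while $M'$ could not, contradicting that additions must be matched. For the backward direction I instead need an event $e_3$ already present in $M$ and $M'$ whose \emph{removal} witnesses the difference: concretely, since $M \InitSays e_1 < e_2$ via a derivation that (by \RuleFont{In-Tr}) may route through some internal chain, one locates a correspondence $c$ that, when removed from $M$, still leaves $M - c \vdash e_1 < e_2$ derivable (e.g.\ $c$ not used in the relevant subderivation, invoking Lemma~\ref{Lemm:No.Use.Shrink} in reverse form), whereas removing the corresponding $c$ from $M'$ leaves no route to recover the missing accuracy. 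If the asymmetry in initial judgements ($e_1 < e_2$ in $M$ versus $e_1 <> e_2$ in $M'$) cannot be eliminated by \emph{any} single removal matched on both sides, then $M \SingleBC (M \text{ minus that } c)$ has no partner from $M'$, contradicting $M \BackBiSim M'$. Since $\BackwardTS$ is the exact converse of $\ForwardTS$ ($M \SingleBC M'$ iff $M' \SingleC M$), the cleanest route is in fact to observe that a bisimulation for $\BackwardTS$ is the same thing as a bisimulation for the transition system $\ForwardTS$ read backwards, and then reuse Theorem~\ref{Theo:For.Bisimilarity} almost verbatim with arrows reversed.

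The main obstacle I anticipate is the distinguishing-move construction in the mixed-accuracy case: in the forward proof one is free to invent a brand-new event $e_3$ and feed $e_2 < e_3$ only to the less-accurate microcosm, but backward evolution can only remove correspondences that are already there, so one must show that the accuracy discrepancy between $M$ and $M'$ is itself \emph{caused} by some removable correspondence on exactly one side — i.e.\ that the witness can always be found among the finitely many correspondences already in $M$ and $M'$. Making this precise will likely require a small lemma saying that whenever $M \InitSays e_1 < e_2$ but $M' \not\InitSays e_1 < e_2$ (only $M' \InitSays e_1 <> e_2$), there is a $c \in \mathcal{C}^*_W$ with $M \SingleBC (M+c)^{-1}$ realising the gap, which is plausible from (\ref{Misc:M4}) and the transitivity rule but needs the bookkeeping spelled out. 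Modulo that, the rest of the case analysis is routine and parallels Theorem~\ref{Theo:Micro.Rules.Const} and Theorem~\ref{Theo:For.Bisimilarity}, so I would state those cases as ``similar, omitted due to space restrictions'' in keeping with the paper's style.
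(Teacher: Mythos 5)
The paper itself supplies no proof of this theorem (nor of Theorem~\ref{Theo:Back.Bisimu}), so your proposal has to stand on its own, and as written it does not. The easy inclusion $\approx\ \subseteq\ \BackBiSim$ via Theorem~\ref{Theo:Back.Bisimu} plus maximality is fine. The problem is the hard inclusion $\BackBiSim\ \subseteq\ \approx$. Your ``cleanest route''---that a bisimulation for $\BackwardTS$ is just a bisimulation for $\ForwardTS$ with arrows reversed, so Theorem~\ref{Theo:For.Bisimilarity} can be reused verbatim---is not valid: $\BackwardTS$ is indeed the converse of $\ForwardTS$, but the largest bisimulation of a transition system and that of its converse are in general different relations, and the forward proof's distinguishing move (inventing a fresh $e_2 < e_3$ and feeding it to the less accurate microcosm) has no counterpart in $\BackwardTS$, whose only moves remove correspondences already present. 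You notice this obstacle yourself, but the replacement you offer is an unproven ``small lemma'' whose statement ($M \SingleBC (M+c)^{-1}$) is not even well-formed, so the central step of the argument is missing rather than merely sketched.

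The gap can be closed, and more directly than your outline suggests, by exploiting the fact that backward moves are labelled by the correspondence being removed. In the representative case $M \vdash e_1 < e_2$ and $M' \vdash e_1 <> e_2$, Lemma~\ref{Lemm:Online.Impl.Init} gives $M' \InitSays e_1 <> e_2$; since $<>$ is never produced by \RuleFont{In-Tr} and cannot live in $I$, the correspondence $e_1 <> e_2$ (or its \RuleFont{CR-Sym} twin) is literally an element of $E(M')$, so its removal is a backward move of $M'$. For $M$ to answer with the same label, $e_1 <> e_2$ would have to be an element of $M$ too, whence $M \vdash e_1 <> e_2$ by \RuleFont{Init} and \RuleFont{In-OK}, contradicting $M \vdash e_1 < e_2$ together with Theorem~\ref{Theo:Micro.Rules.Const}. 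The genuinely delicate case is instead $r\ =\ ?$ against $r'\ =\ <$ when the latter is obtained in $M'$ only through \RuleFont{In-Tr}: then no single element of $M'$ witnesses $e_1 < e_2$ and one must reason about the removable chain links. Supplying that case analysis is what a complete proof requires, and it is exactly what your proposal leaves open.
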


\section{Related Work} \label{Sect:Related}

The partial knowledge of a microcosm w.r.t. its enclosing world of events resembles the classical ``knowledge vs common knowledge'' model \cite{Halp+Mose:1990,Fagi+Halp+Mose+Vard:2003}.
The latter works, however, take an algorithmic approach.
Whereas our work is proof-theoretic.
Ben-Zvi and Moses \cite{BenZ+Mose:2010,BenZ:2011} take the same approach to coin the \textit{Syncausality} as an extension to \textsf{happens-before} for synchronised computations.
Gonczarowski and Moses \cite{Gonc+Mose:2013} too generalise the classic model to characterise the interactive epistemic state when temporal constraints must be met.
The final work in this thread \cite{Abdu:2005} extends the classic model for reasoning about trust in distributed settings.

Burckhardt \cite{Burc:2014} takes a novel approach to define causal consistency not just in terms of \textsf{happens-before}, but also w.r.t. arbitration order and visibility order.
The gain is a more precise definition of how causality is used to ensure consistency.
In addition to being model theoretic, unlike our work, his approach is not based on explicit causality \cite{Bail+Feke+Ghod+Hell+Stoi:2012}.

One particular motivation for confining the universal knowledge of a world of events to microcosms is scalability.
Systems that reduce the overhead of maintaining scalable causal consistency in wide-area replicated key-value stores include Orbe \cite{Du+Elni+Roy+Zwae:2013}, COPS \cite{Lloy+Free+Kami+Ande:2011}, Eiger \cite{Lloy+Free+Kami+Ande:2013}, and ChainReaction \cite{Alme+Leit+Rodr:2013}.
COPS, in particular, defines \textit{causal+ consistency}, which extends causal consistency with convergent conflict handling.
This ensures that replicas that see concurrent updates will be updated in a consistent fashion.
The systems mentioned above can incur significant overhead (in computation, storage, network load, and latency) to maintain causal consistency in scalable fashion.
Du et. al \cite{Du+Iorg+Roy+Zwae:2014} explain the performance overhead of causal consistency vs. eventual consistency.
They introduce a protocol to reduce this overhead at the cost of degrading the quality-of-service (offered to the client) by significantly increasing data staleness.


\section{Conclusion and Future Work} \label{Sect:Conclusion}

To the best of our knowledge, this is the first proof-theoretic modelling of causality in distributed systems, with special emphasis on partiality of causal knowledge.
In our model, a device has strictly less causal information than a holistic causality store (Lemma~\ref{Lemm:WoE.More.Accurate.Micro}).
We offer rules for deducing causal information both when a device is online and offline (Definitions~\ref{Defn:Micro.Decision} and \ref{Defn:Offline.Rules}).
We prove properties of our deductions, which are both theoretically attractive and practically valuable (Theorems~\ref{Theo:Rules.Comp}, \ref{Theo:Micro.Rules.Const}, Corollary~\ref{Corr:Mu.Hats}, and Lemmata~\ref{Lemm:Offline.Not.Refute.Online} and \ref{Lemm:Offline.Not.Disagree.Online}).
We refute a causality folklore using a mechanical proof (Lemma~\ref{Lemm:Sample.Refutation.1}).
We define two notions of bisimilarity (Definitions~\ref{Defn:For.Bisimilarity} and \ref{Defn:Back.Bisimilarity}) to prove that the order of addition or removal of causal data is irrelevant for bisimilar devices (Theorems~\ref{Theo:For.Order.Irrelevant} and \ref{Theo:Back.Order.Irrelevant}, respectively).

There are two immediate improvements to our model that form interesting future work.
The first is the study of how to retain (\ref{Misc:M4}) whilst still not disallowing arrival of new information (like Fig.~\ref{Fig:NVCs}).
The second is getting forward bisimilarity (and, therefore, backward bisimilarity) to also consider evolution from one microcosm to another by updates (as well as additions).

Our modelling does not take it into consideration that information about concurrent events might arrive not at the same time.
That lag makes a device observe an internal ordering for concurrent events.
The interplay between the concurrency and the internal order becomes more interesting when relaying the concurrency to the next device in the vicinity.
Studying that interplay is future work.
We anticipate that a new set of proof systems will be required, their status w.r.t. the ones in this paper also requires dedicated study.
Another related future work is to take arbitration and visibility into account.

The ability to reason about partial causal information suggests positive interaction with causal+ consistency:
replicas that are actually causal but for which the causality is not known yet will remain consistently updated as the known causality increases (i.e., updates do not have to be redone as knowledge increases).
This is an important property of causal+ consistency that can be a useful model to have together with the deduction systems introduced in this paper.
Future work will reveal how the ability to deduce causality can increase the efficiency of COPS (and its counterparts) by reducing the overhead.

\paragraph*{\textbf{Acknowledgements}}

This work was partially funded by the \href{syncfree.lip6.fr}{SyncFree} project in the European Seventh Framework Programme under Grant Agreement 609551 and by the Erasmus Mundus Joint Doctorate Programme under Grant Agreement 2012-0030.
Our special thanks to the \href{syncfree.lip6.fr}{SyncFree} peers for their prolific comments on the early versions of this work.
We would like to also thank the anonymous referees for their constructive discussion over the ICE forum.

\bibliographystyle{eptcs}
\bibliography{WoE}

\end{document}